\documentclass{vldb}
\usepackage{booktabs} 
\usepackage{graphicx}
\usepackage{here}
\usepackage[ruled, linesnumbered, vlined]{algorithm2e}
\usepackage{subfigure}
\usepackage{calc}
\usepackage{tabularx}
\usepackage{hyperref}
\usepackage{balance}
\usepackage{mathtools}
\usepackage{lipsum,multicol}
\usepackage{soul}
\usepackage{stfloats}

\usepackage{amsmath}
\newcommand{\argmin}{\operatornamewithlimits{argmin}}
\mathchardef\mhyphen="2D 

\usepackage{amsthm}
\newtheorem{theorem}{Theorem}

\theoremstyle{remark}

\usepackage{color, colortbl}

\definecolor{LightCyan}{rgb}{0.88,1,1}

\newenvironment*{highlight}{\color{black}}{}

\usepackage{array,multirow}
\usepackage{float}

\usepackage{listings}

\lstset{
    language=C++,
    basicstyle=\ttfamily,
    frame=tb, 
    tabsize=2, 
    showstringspaces=false, 
    numbers=left, 
    keywordstyle=\color{blue}, 
    xleftmargin=.08\columnwidth,
    xrightmargin=.08\columnwidth,
}

\makeatletter
\def\hlinewd#1{%
\noalign{\ifnum0=`}\fi\hrule \@height #1 %
\futurelet\reserved@a\@xhline}
\makeatother

\makeatletter
\newcommand{\removelatexerror}{\let\@latex@error\@gobble}
\makeatother
\let\oldnl\nl
\newcommand{\nonl}{\renewcommand{\nl}{\let\nl\oldnl}}%


\pagenumbering{gobble}

\begin{document}






\vldbTitle{Distributed Edge Partitioning for Trillion-edge Graphs}
\vldbAuthors{Masatoshi Hanai, Toyotaro Suzumura, Wen Jun Tan, Elvis Liu, Georgios Theodoropoulos and Wentong Cai}
\vldbDOI{https://doi.org/10.14778/3358701.3358706}
\vldbVolume{12}
\vldbNumber{13}
\vldbYear{2019}

\title{Distributed Edge Partitioning for Trillion-edge Graphs}




\numberofauthors{3}
\author{
    \alignauthor Masatoshi Hanai \textsuperscript{\normalsize{1}}\thanks{This work is initiated when Dr. Hanai was in NTU.}\\
    \email{mhanai@acm.org}
    \alignauthor Toyotaro Suzumura \textsuperscript{\normalsize{2}}\\  
    \email{suzumura@acm.org}
    \alignauthor Wen Jun Tan \textsuperscript{\normalsize{3}}\\
    \email{wtan047@e.ntu.edu.sg}
    \and
    \alignauthor Elvis Liu \textsuperscript{\normalsize{1}}\\
    \email{esyliu@sustc.edu.cn}
    \alignauthor \mbox{Georgios Theodoropoulos \textsuperscript{\normalsize{1}}}\thanks{Corresponding author.}\\
    \email{georgios@sustc.edu.cn}
    \alignauthor Wentong Cai \textsuperscript{\normalsize{3}}\\
    \email{aswtcai@ntu.edu.sg}
    \end{tabular}

}
\end{table}

\begin{highlight}
We provide the time complexity for each computing unit since Distributed NE is executed in parallel.
Let $n$ be the number of the computing units (i.e., the cores in practice).
\vspace{-5px}
\begin{theorem}[Efficiency]\label{thr:time}
In Distributed NE, suppose the initial graph is evenly distributed to each allocation process; the workload in each machine is evenly assigned to its computing units; and $|V|,|E| \gg |P|, n$.
The worst-case time complexity of the local computation per unit is $\mathcal{O}\Bigl(\frac{d|E|(|P| + d)}{n|P|}\Bigr)$, where $d$ is the maximum degree.
\end{theorem}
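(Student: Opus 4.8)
The plan is to bound the \emph{total} amount of local computation performed over the entire execution of Algorithms~\ref{alg:parallelpartition}--\ref{alg:details} by an amortized charging argument, and then divide by the number of computing units using the two evenness hypotheses; throughout I write $\deg(v)$ for the degree of $v$ in $G$, so $D_{\mathit{rest}}(v)\le\deg(v)\le d$ and every vertex lies in at most $|P|$ partitions, i.e. $|\texttt{Parti}(v)|\le|P|$. First I would dispatch the expansion side: in the process for partition $p$ (Algorithm~\ref{alg:parallelpartition}) each one-hop edge allocated to $p$ triggers exactly one insertion into the priority queue $B_p$, so the total number of queue operations is $O(|E_p|)=O(|E|/|P|)$ and the total number of edges merged into $E_p$ is $|E_p|$; since $|V|,|E|\gg|P|,n$, after spreading over the $n$ units this is lower-order relative to the allocation-side cost below (the $O(\log)$ priority-queue factor and the per-iteration bookkeeping are absorbed for the same reason).

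For the allocation side I would analyse the four phases of \texttt{EdgeAllocation()} separately, charging each step to an object created a bounded number of times. Each edge is allocated exactly once, so there are at most $|E|$ one-hop allocations over the run, and each allocation of $e_{v,u}$ creates exactly one \emph{boundary event} $\langle u,p\rangle$; hence a vertex $u$ is the subject of at most $\deg(u)\le d$ boundary events, and $\sum_u\deg(u)^2\le d\sum_u\deg(u)=O(d|E|)$. In \texttt{AllocateOneHopNeighbors} a vertex may re-enter a boundary queue and be re-selected, but at most $\deg(\cdot)$ times, and each selection only scans an adjacency list, so this phase costs $O\!\big(\sum_v\deg(v)^2\big)=O(d|E|)$ in total. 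In \texttt{AllocateTwoHopNeighbors}, for one boundary event $\langle u,\cdot\rangle$ the machines holding $u$'s edges jointly scan the $\deg(u)$ incident edges and, for each not-yet-allocated $e_{u,w}$, compute $\texttt{Parti}(u)\cap\texttt{Parti}(w)$ and the $\texttt{argmin}$ at cost $O(|P|)$, plus $O(d)$ for reading $u$'s adjacency and metadata; summing over all boundary events gives $O\!\big((|P|+d)\sum_u\deg(u)^2\big)=O(d|E|(|P|+d))$. \texttt{SyncVertexAllocations} and \texttt{ComputeLocalDrest} add only $O(|P|)$ and $O(d)$ per boundary event, i.e. $O(d|E||P|)$ and $O(d|E|)$, which are absorbed. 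Thus the total local work is $O(d|E|(|P|+d))$; by the hypotheses that the initial edges are evenly distributed over the $|P|$ allocation processes and that each machine assigns its load evenly to its $n$ units, this total is shared essentially uniformly among the $n|P|$ units, giving $\mathcal{O}\!\big(\tfrac{d|E|(|P|+d)}{n|P|}\big)$ per unit.

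I expect \texttt{AllocateTwoHopNeighbors} to be the main obstacle, for two reasons. First, one must bound how often a vertex re-enters a boundary queue before it is expanded; I would argue this is at most its degree (each re-entry consumes a distinct incident one-hop allocation) and that re-expanding an already-exhausted vertex costs no more than an adjacency scan, which is exactly what turns a naive $O(|E|)$ count of examined two-hop edges into $\sum_u\deg(u)^2=O(d|E|)$ and hence introduces the extra factor $d$. Second, the cost of $\texttt{Parti}(u)\cap\texttt{Parti}(w)$ is proportional to the current number of partitions containing $u$ or $w$, which is only bounded by $|P|$; rather than charging it to a partition I would charge it to the two-hop edge $e_{u,w}$ under examination, which keeps the accounting tight. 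Finally, passing from the global bound $O(d|E|(|P|+d))$ to a per-unit bound uses the evenness hypotheses essentially: they exclude the adversarial case in which all boundary events of the few high-degree vertices pile up on a single machine or core, so that each of the $n|P|$ units is responsible for a $1/(n|P|)$ fraction of the total, completing the proof.
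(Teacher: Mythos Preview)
Your argument is correct and reaches the right bound, but it is considerably more elaborate than what the paper actually does. The paper's proof is three sentences: it declares \texttt{AllocateTwoHopNeighbors()} to be the dominant phase, asserts that in the worst case $|BP_{\mathit{new}}|=\mathcal{O}(|E|/|P|)$ per allocation process, distributes this over the $n$ units, and then multiplies by the per-vertex fan-out $\mathcal{O}(d)$ and the per-inner-iteration cost $\mathcal{O}(|P|+d)$ (from the intersection/argmin at Lines~14,16 and the edge update at Line~17). That is the whole proof; there is no amortized charging, no priority-queue analysis, and no discussion of boundary re-entries.

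What you do differently is replace the one-line assertion $|BP_{\mathit{new}}|=\mathcal{O}(|E|/|P|)$ by an explicit global charging argument: each one-hop allocation of $e_{v,u}$ creates exactly one boundary event $\langle u,p\rangle$, so $u$ is a boundary vertex at most $\deg(u)$ times, and the total two-hop work is bounded by $(|P|+d)\sum_u\deg(u)^2=O(d|E|(|P|+d))$ before dividing by $n|P|$. This buys you a justification for the very step the paper leaves implicit, and it also dispatches the expansion-process side and the sync/score phases, which the paper ignores entirely. Conversely, the paper's route is cleaner because it goes straight to the per-process count via the even-distribution hypothesis rather than summing $\deg(u)^2$ globally; your ``main obstacle'' paragraph about bounding re-entries is exactly the work the paper sidesteps by that single worst-case bound on $|BP_{\mathit{new}}|$. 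Both arrive at $\mathcal{O}\bigl(d|E|(|P|+d)/(n|P|)\bigr)$.
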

\vspace{-10pt}
\begin{proof}
The dominant part in the local computation is \\ \texttt{AllocateTwoHopNeighbors()} in Algorithm~\ref{alg:details}.
In the worst case, the number of messages in $BP_{new}$ is $\mathcal{O}(|E| / |P|)$.
In Line 12, $|BP_{new}| / n = \mathcal{O}(|E| / n |P|)$ vertices per unit are processed.
For each vertex $u$, $\mathcal{O}(d)$ neighbors are processed.
The complexity from Line~13 to 18 is $\mathcal{O}(|P| + d)$ since the time complexity is $\mathcal{O}(|P|)$ at Line 14,16 and $\mathcal{O}(d)$ at Line 17.
Therefore, the total time complexity of the function is $\mathcal{O}\Bigl(\frac{|E|}{n |P|}\Bigr) \times \mathcal{O}(d) \times \mathcal{O}(|P| + d) = \mathcal{O}\Bigl(\frac{d|E|(|P| + d)}{n|P|}\Bigr)$
\end{proof}
\vspace{-8pt}
\end{highlight}

\section{Empirical Evaluation} \label{sec:evaluation}

\begin{figure*}[h]
 \centering
 \subfigure[\texttt{Pokec}]{\includegraphics[width=.2\textwidth]{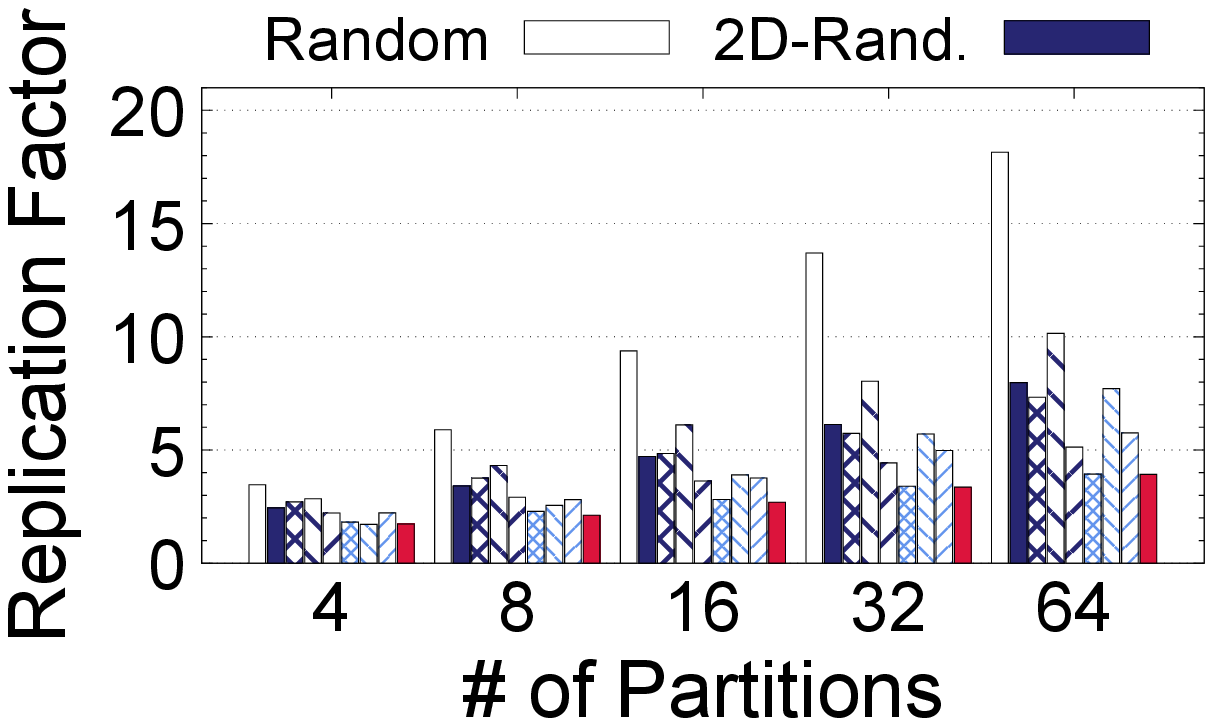}\label{fig:quality-pockec}}%
 \subfigure[\texttt{Flickr}]{\includegraphics[width=.2\textwidth]{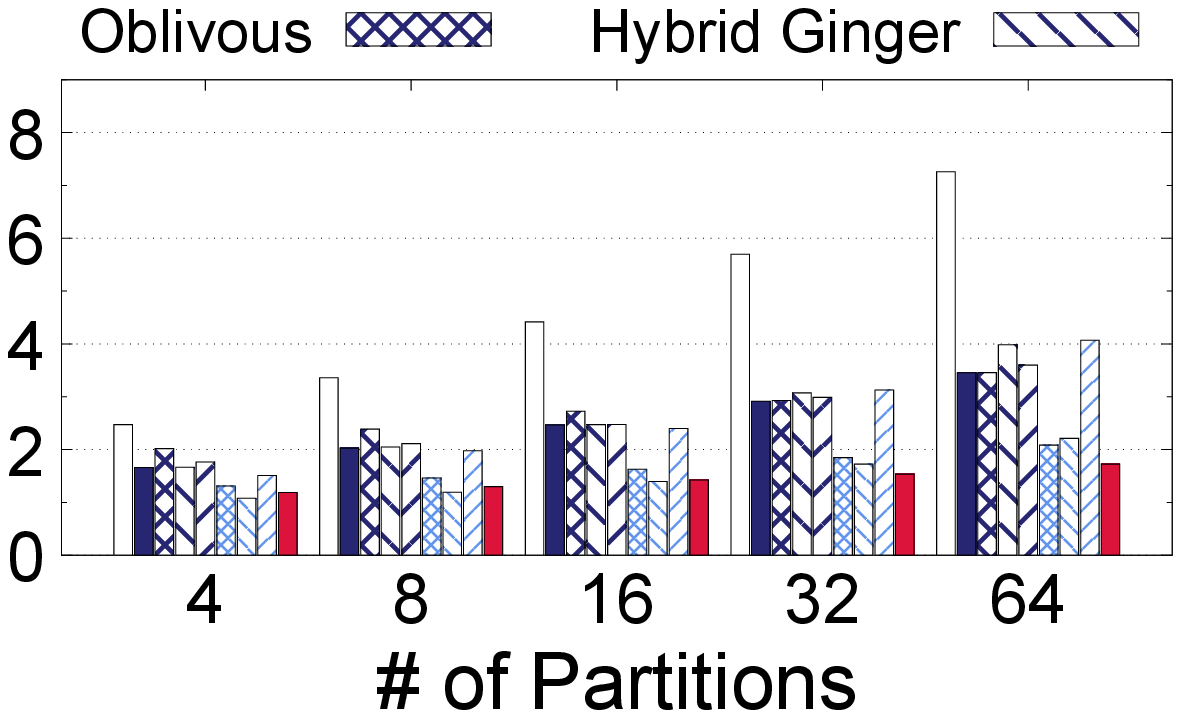}\label{fig:quality-flicker}}%
 \subfigure[\texttt{LiveJ.}]{\includegraphics[width=.2\textwidth]{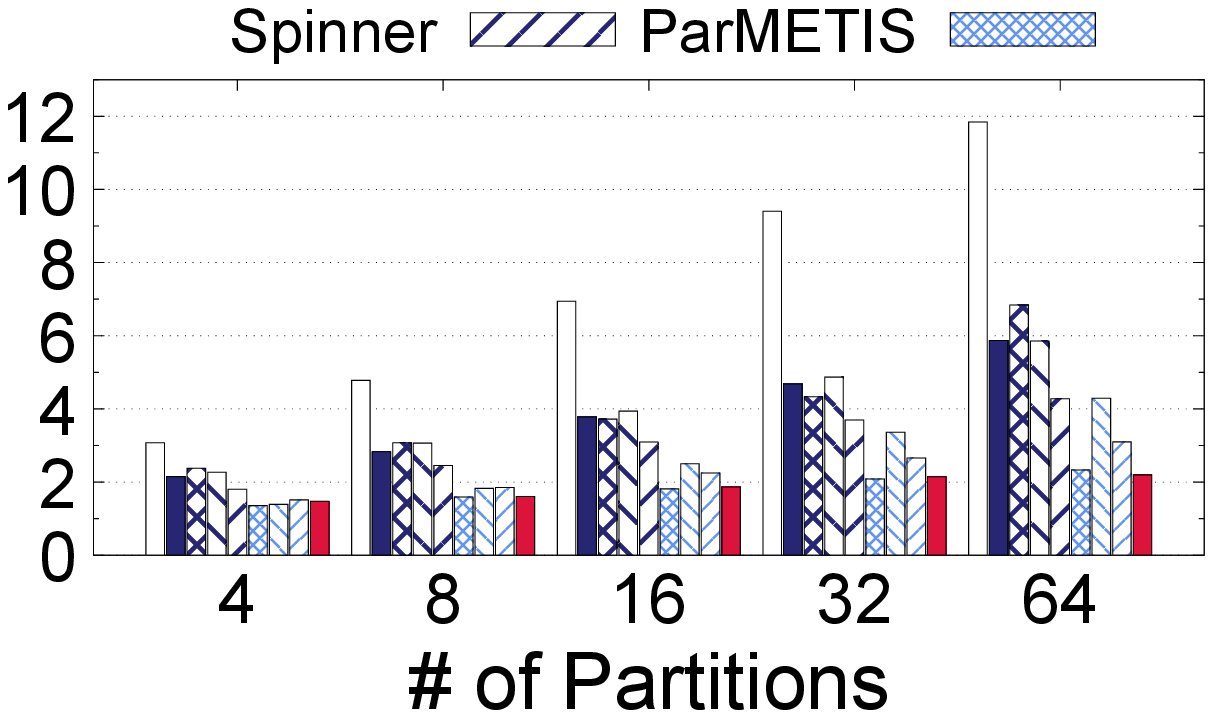}\label{fig:quality-livejournal}}%
 \subfigure[\texttt{Orkut}]{\includegraphics[width=.2\textwidth]{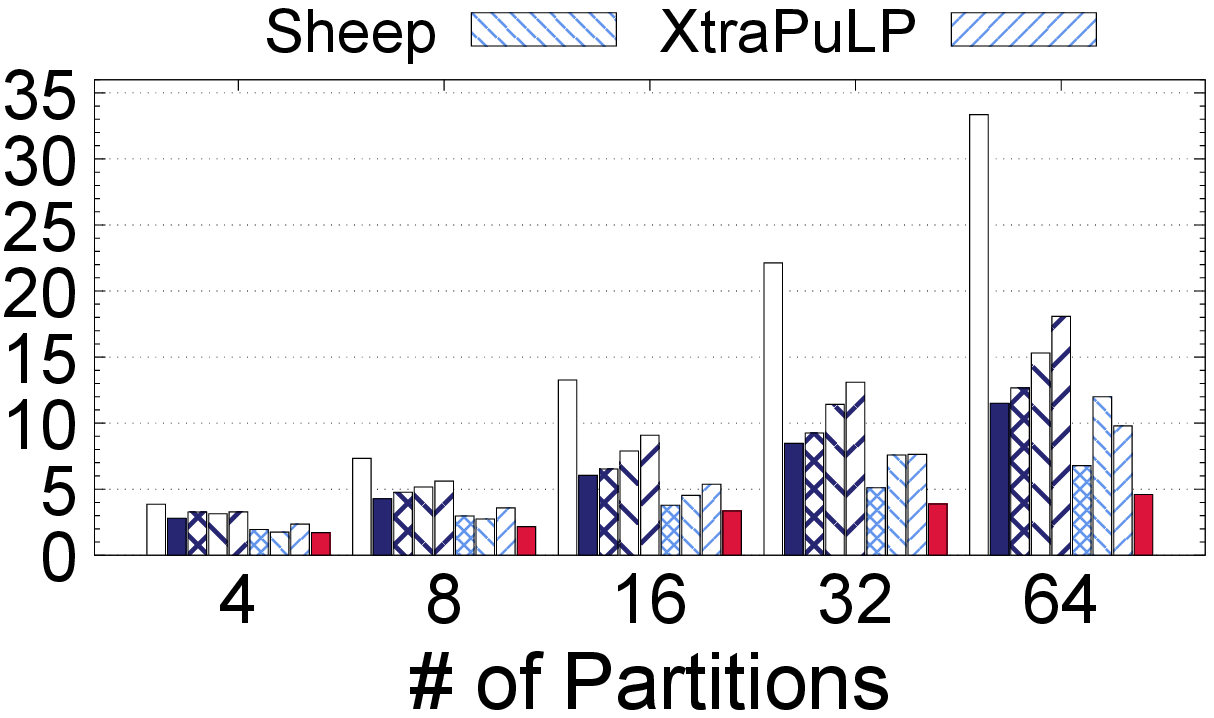}\label{fig:quality-orkut}}%
 \subfigure[\texttt{Twitter}]{\includegraphics[width=.2\textwidth]{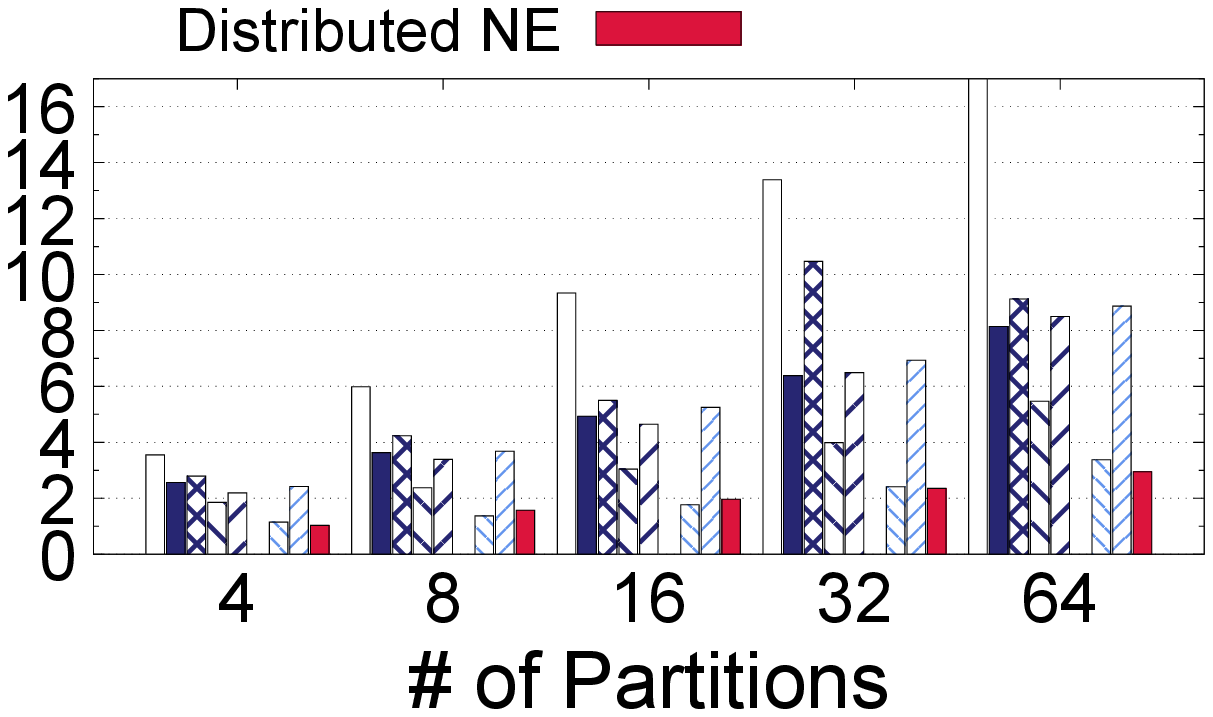}\label{fig:quality-twitter}}\\
 \vspace{-5pt}
 \subfigure[\texttt{Friendster}]{\includegraphics[width=.2\textwidth]{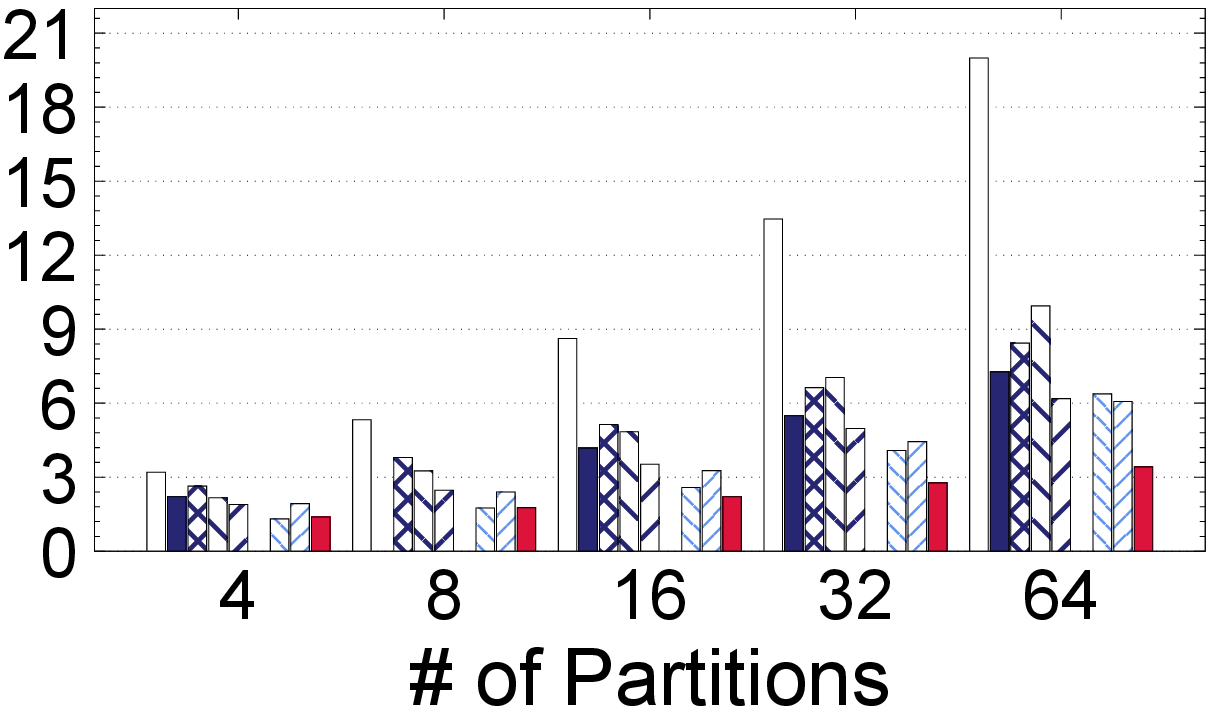}\label{fig:quality-friendster}}%
 \subfigure[\texttt{WebUK}]{\includegraphics[width=.2\textwidth]{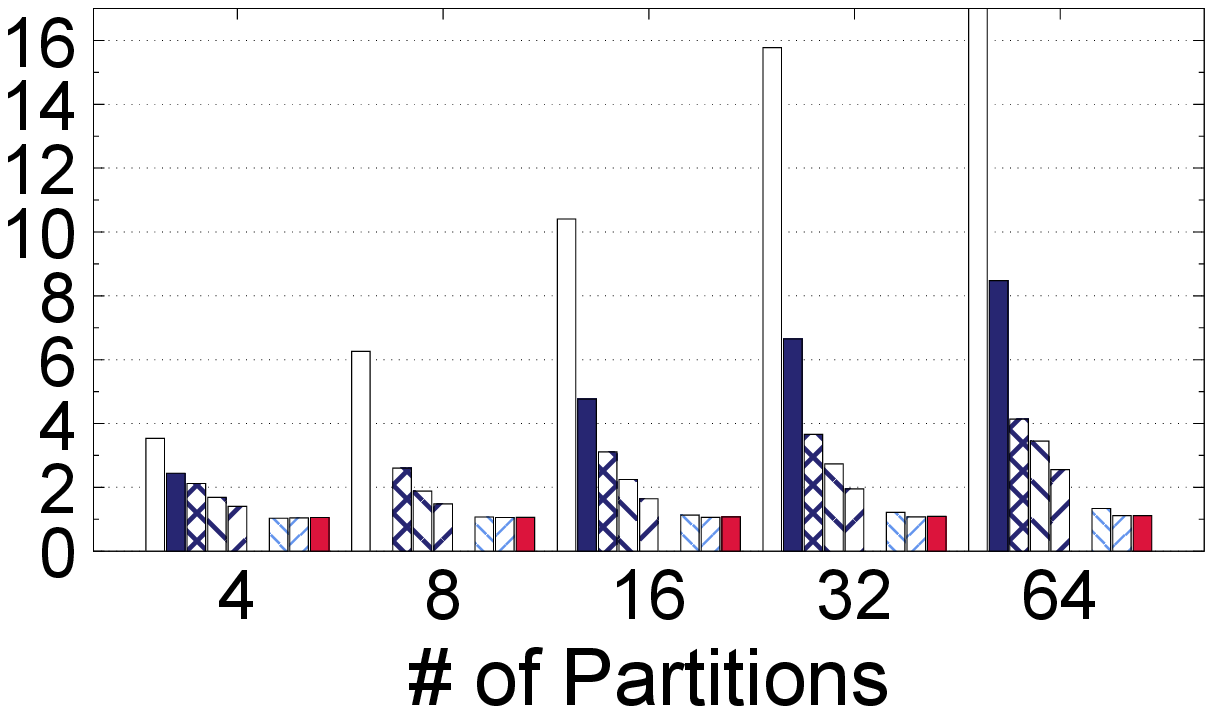}\label{fig:quality-webuk}}%
 \subfigure[\texttt{RMAT Scale20}]{\includegraphics[width=.2\textwidth]{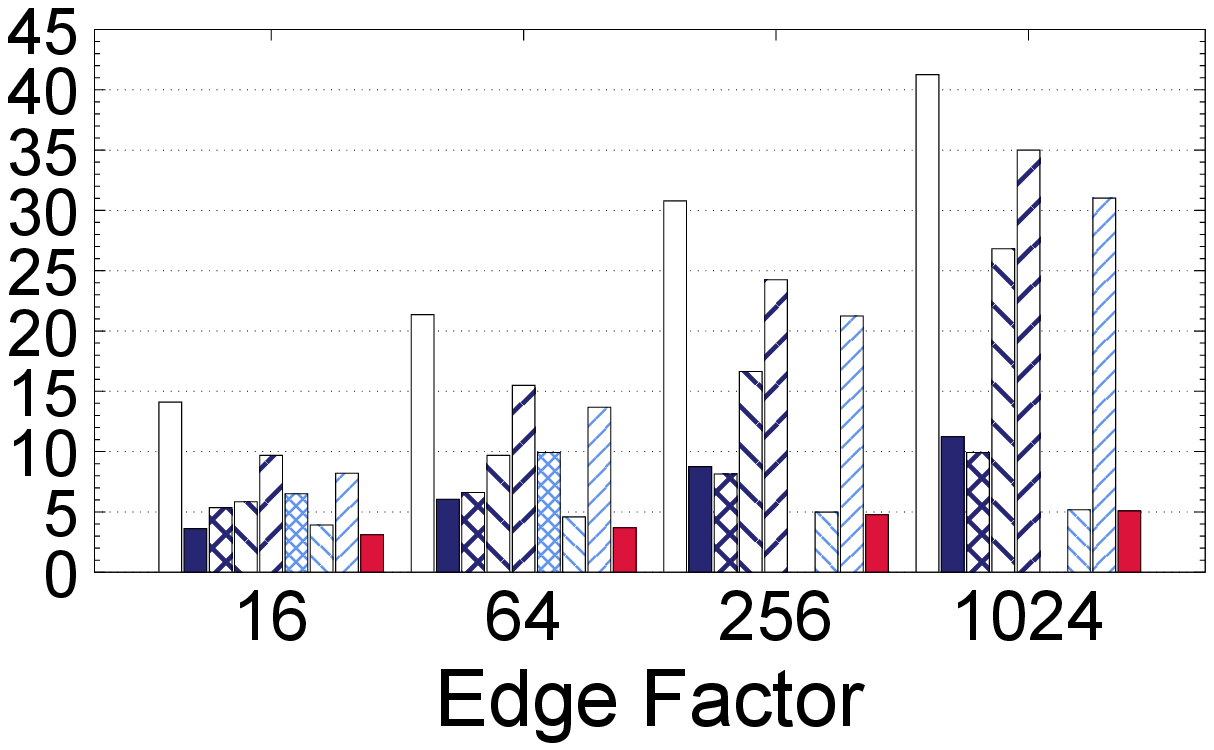}\label{fig:quality-rmat1}}%
 \subfigure[\texttt{RMAT Scale21}]{\includegraphics[width=.2\textwidth]{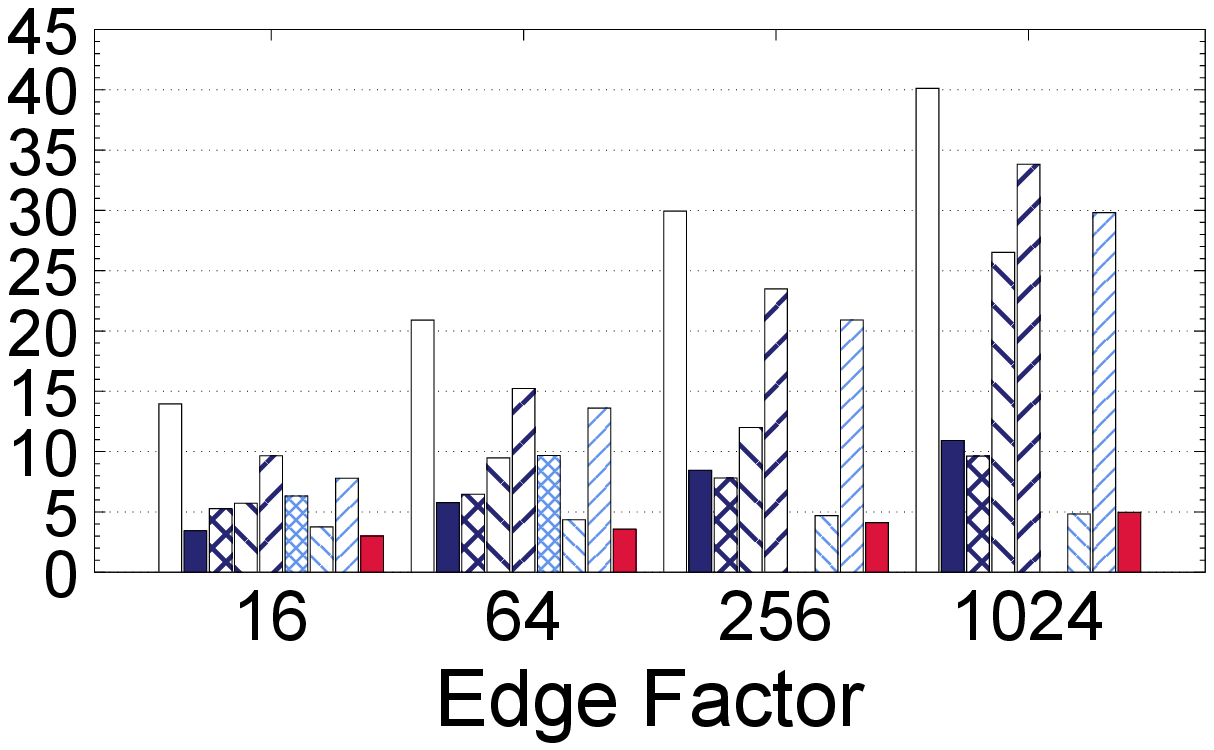}\label{fig:quality-rmat2}}%
 \subfigure[\texttt{RMAT Scale22}]{\includegraphics[width=.2\textwidth]{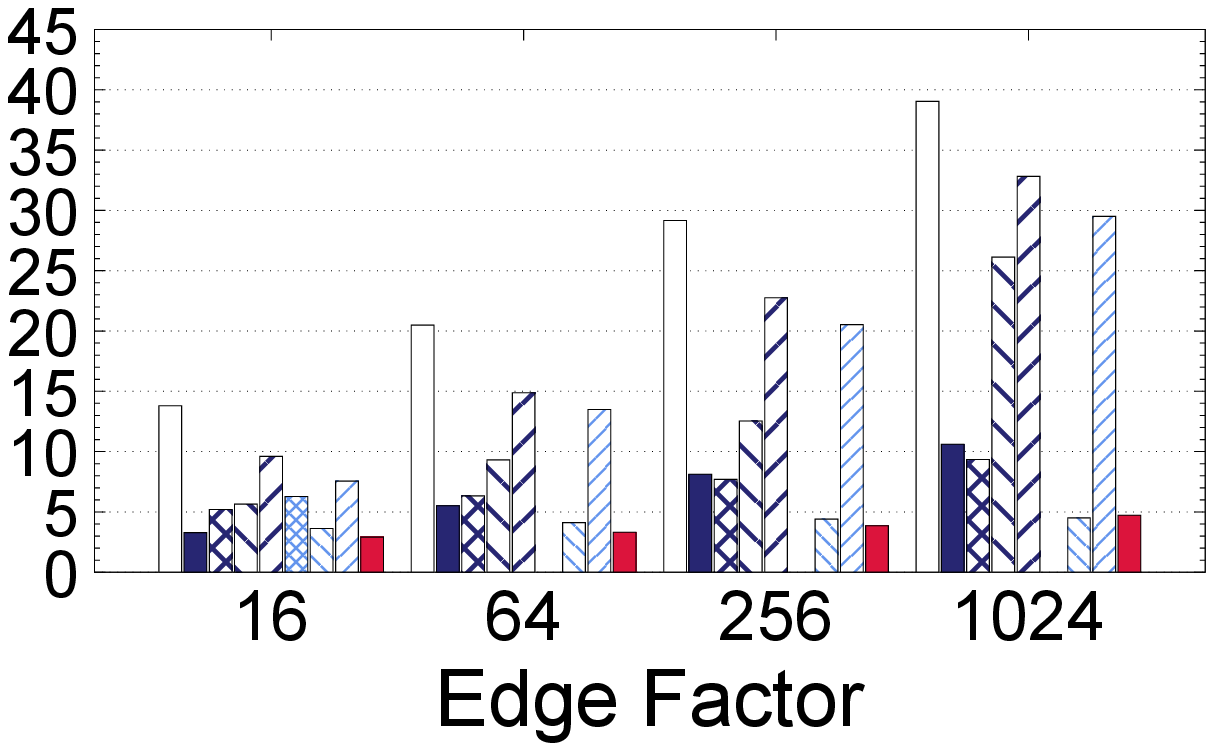}\label{fig:quality-rmat3}}%
 \vspace{-10pt}
 \caption{Replication Factor of Real-world Graphs and RMAT Graphs. $|P| = 64$ in RMAT Graphs.}\label{fig:quality}
 \vspace{-15pt}
\end{figure*}

In this section, we discuss the empirical analysis of the quality, scalability, and efficiency.
Our claims are as follows:

\smallskip
\noindent\textbf{Highest Quality.} In various types of skewed graphs, Distributed NE always generates higher-quality partitions than any other state-of-the-art distributed methods.

\noindent\textbf{Highest Scalability.} Memory usage of Distributed NE is an order of magnitude smaller than that of the state-of-the-art high-quality distributed methods. It can handle a larger graph with fewer machines than the existing methods.

\noindent\textbf{Comparable Efficiency.} The elapsed time of Distributed NE is comparable to or better than the state-of-the-art high-quality distributed methods.

\noindent\textbf{Trillion-edge Graph.} Due to its scalability and efficiency, Distributed NE can generate high-quality partitions of the trillion-edge graph using only a few hundreds of machines.




\subsection{Benchmarks and Setup}
\noindent\textbf{Real-world Datasets.}
For the evaluation, we use different graph datasets with over a billion of vertices, provided by SNAP~\cite{snapnets}, KONECT~\cite{KONECT,Kunegis:2013:KKN:2487788.2488173} and LWA~\cite{LWA}.
We use 7 real-world skewed graphs as summarized in Table \ref{tbl:real_world}.
\begin{table}[h]
\vspace{-13pt}
\centering
\caption{Real-world Skewed Graphs}
\label{tbl:real_world}
\scalebox{0.8}{
\begin{tabular}{|l|r|r|l|r|r|}
\hline
Dataset                & Vertices & Edges & Dataset & Vertices & Edges \\ \hlinewd{2\arrayrulewidth}
\texttt{Pokec} \cite{takac2012data} & 1.63M & 30.62M & \texttt{Twitter} \cite{Kwak:2010:TSN:1772690.1772751} & 41.65M  & 1.46B \\
\texttt{Flickr} \cite{Mislove:2008:GFS:1397735.1397742}       & 2.30M      & 33.14M & \texttt{Friendster} \cite{yang2012defining} & 65.60M & 1.80B          \\ 
\texttt{LiveJ.} \cite{Backstrom:2006:GFL:1150402.1150412}   & 4.84M      & 68.47M  & \texttt{WebUK} \cite{BSVLTAG} & 105.15M & 3.72B   \\ 
\texttt{Orkut} \cite{6413740} & 3.07M      & 117.18M & & &      \\ \hline
\end{tabular}
}
\vspace{-10pt}
\end{table}


\smallskip
\noindent\textbf{Synthetic Datasets and Trillion-edge Graph.}
Since each real-world dataset has the entirely different graph feature, the results would not provide any common performance characteristics.
Thus, further experiments are conducted using various sizes of synthetically generated graphs which have the similar graph features.
We use RMAT~\cite{chakrabarti2004r,Leskovec2010} graphs whose vertex size are from Scale20 to Scale30, where \emph{Scale$N$} is referred to as a graph with $2^N$ vertices.
Their average number of edges per vertex, called \emph{edge factor} (\emph{EF}), ranges from $2^4$ to $2^{10}$ according to Graph500 setting~\cite{graph500} (the edge factor is $2^4$) and Facebook's trillion-edge graph~\cite{ching2015one}, which is reported to have 1.45 billions vertices and 1 trillion edges (thus, the edge factor is around $2^{10}$).

The trillion-edge graph is simulated using RMAT with the same scale as Facebook's trillion-edge graph because no real-world trillion-edge graph is publicly available. 
The graph is Scale30 with Edge factor $2^{10}$, namely, the graph consists of $1,073,741,824$ ($= 2^{30}$) vertices and $1,099,511,627,776$ ($= 2^{30} \times 2^{10}$) edges.


\noindent\textbf{Benchmark Partitioning Algorithms.}
We compare \textit{Distributed NE} with \begin{highlight}8\end{highlight} distributed partitioning methods.
\textit{Random} is the simple one-dimensional hashing.
\begin{highlight}\textit{2D-Random} (or referred to as \textit{Grid}) is the two-dimensional hashing, which is used for the initial assignment of \textit{Distributed NE}. \end{highlight}
\textit{Oblivious}~\cite{joseph2012powergraph} and \textit{Hybrid Ginger}~\cite{Chen:2015:PDG:2741948.2741970} are the state-of-the-art hash-based edge partitioning methods including iterative refinements.
\textit{Spinner}~\cite{7930049} is the state-of-the-art hash-based vertex partitioning method, where vertices are assigned randomly followed by the iterative refinements based on Label Propagation.
\textit{ParMETIS}~\cite{Karypis:1998:PAM:287098.287108} is the standard multi-level vertex partitioning.
\textit{Sheep}~\cite{Margo:2015:SDG:2824032.2824046} is the state-of-the-art high-quality distributed edge partitioning method based on the elimination-tree conversion.
\textit{XtraPuLP}~\cite{7967155} is the state-of-the-art high-quality distributed vertex partitioning method, where vertices are directly assigned based on Label Propagation without initial random allocation.
For the comparison of the partitioning quality with the vertex partitioning methods, such as \textit{ParMETIS}, \textit{Spinner}, and \textit{XtraPuLP}, we convert their vertex-partitioned graphs into the edge-partitioned ones as demonstrated in~\cite{Bourse:2014:BGE:2623330.2623660}, that is, each edge is randomly assigned to one of its adjacent vertices' partitions.

\noindent\textbf{Parameter Setting.}
The imbalance factor $\alpha$ discussed in Section~\ref{sec:background} is set to $1.1$.
Moreover, we set the expansion factor $\lambda$ equal to $0.1$ to balance the quality and performance as discussed in Section~\ref{sec:multiex}.

\vspace{-5pt}
\subsection{Quality Evaluation}
The quality of partitioning is evaluated by means of the replication factor (RF) as Equation~\eqref{eq:replicationfactor} in Section~\ref{sec:background}.
We execute \textit{Distributed NE} with five different random seeds and show the median value, where the relative standard error of the result is less than 5\%.

Figure~\ref{fig:quality-pockec} to \ref{fig:quality-webuk} show the replication factor of real-world graphs on different numbers of partitions from 4 to 64.
Note that even on our big memory server (around 1TB), \textit{ParMETIS} is unable to process \texttt{Twitter}, \texttt{Friendster}, and \texttt{WebUK} due to the insufficient memory space.

Overall, \textit{Distributed NE} outperforms the other methods in real-world graphs.
Especially in the severe cases, such as, a case of more partitions or a case where replication factor is relatively high (e.g., \texttt{Orkut}, \texttt{Pokec}, and \texttt{Friendster}), the improvement from the others is much more significant.
Only in the smaller number of partitions or in graphs whose replication factor is low, the other methods are comparable to \textit{Distributed NE}. 
For example, in \texttt{Flickr} and \texttt{Twitter} of 4 to 16 partitions, \textit{Sheep} is slightly better than \textit{Distributed NE}. 
In \texttt{LiveJ.}, \textit{ParMETIS} is also slightly better than \textit{Distributed NE} in 4 to 16 partitions.
In \texttt{WebUK}, replication factor of \textit{Distributed NE} is similar quality to that of \textit{Sheep} and \textit{XtraPuLP}.
Their scores are less than 1.1, which is a nearly ideal case, that is, there are no vertex replications, and replication factor is equal to 1.

Moreover, the partitioning quality of \textit{Distributed NE} is stable.
It always provides high-quality partitions.
On the other hand, the other methods have some specific graphs which are unsuited to partition.
For example, \textit{Sheep} generates high-quality partitions in \texttt{Twitter} but worse in \texttt{Pokec},  \texttt{LiveJ.}, \texttt{Orkut} and \texttt{Friendster}.
\textit{XtraPuLP} is significantly worse in \texttt{Twitter}, and \texttt{Friendster} and RMAT graphs discussed below.

Figure~\ref{fig:quality-rmat1} to \ref{fig:quality-rmat3} show the replication factor of RMAT synthetic graphs on 64 partitions.
Note that \textit{ParMETIS} is unable to process  256 and 1024 in all scales due to the insufficient memory space.

As with the real-world graphs, \textit{Distributed NE} outperforms the other methods.
In general, the replication factor becomes higher as the increase of the edge factor.
The result is intuitive since graphs become more complicated as the increase of edges, and thus it becomes more difficult to generate high-quality partitions.
Moreover, in the same edge factor, the replication factor is almost the same in the different scales: Scale20, Scale21, and Scale22.
This means that the difficulty in partitioning a graph depends on its complexity rather than its scale. 

\begin{highlight}
To summarize, we can classify all the methods into 3 categories.
The first categories are based on the hashing, such as \textit{Random}, \textit{2D-Random},  \textit{Oblivious}, \textit{Hybrid Ginger} and \textit{Spinner}. 
These methods provide low quality due to the randomness.
The methods of the second category indirectly solve the graph partitioning problem by converting it into the other similar problem, such as label propagation (\textit{XtraPuLP}) or tree partitioning (\textit{Sheep}).
Such indirect methods provide high-quality partitions only for some cases.
Finally, the methods of the last category directly solve the optimization problem based on the approximate algorithm, such as \textit{ParMETIS} and \textit{Distributed NE}.
These methods stably generate high-quality partitions.
\end{highlight}


\vspace{-10pt}
\subsection{Performance Evaluation}
The performance of the partitioning algorithms is evaluated.
We compare the memory consumption and elapsed time of \textit{Distribute NE} with the three high-quality partitioning methods: \textit{ParMETIS}, \textit{Sheep}, and \textit{XtraPuLP}.
Hash-based algorithms, such as \textit{Random}, \textit{Oblivious}, \textit{Hybrid Ginger}, and \textit{Spinner}, are omitted from the performance evaluation because the partitions by these algorithms do not reach to the sufficient quality as shown previously.
Instead, these algorithms is efficient and scalable since they only include light-weight hash calculation and local refinements.

Table~\ref{tbl:conf} shows the configuration of the distributed environment used to evaluate the performance.
According to Facebook's work~\cite{ching2015one}, where trillion-edge graph can be processed on only 200 machines, we use the similar scale of the distributed environment consisting of up to $256$ machines.
\begin{table}[t]
\begin{center}
\caption{Computational Environment}
\label{tbl:conf}
\scalebox{0.8}{
\begin{tabular}[c]{lr} \hline
  Service                & ASPIRE 1 in NSCC Singapore \\
  CPUs per Machine      & Dual Sockets Intel E5-2690v3 (2 $\times$ 12 cores) \\
  Memory / Network      & 98 GB per Machine / InfiniBand EDR \\
  \# of Machines        & $4$ to $256$ ($2^{2}$ to $2^{8}$) \\ 
  C++ Compiler / MPI    & GCC 5.4 (--O3 flag) / IntelMPI 5.1.2 \\
  OS                    & Red Hat Server release 6.9 \\ \hline
\end{tabular}
}
\end{center}
\vspace{-8pt}
\end{table}





\noindent\textbf{Memory Consumption.}
The scalability of the graph partitioning is evaluated with memory consumption.
Figure~\ref{fig:memusage_real} shows the memory consumption with the real-world graphs on 64 machines (64 processes).
We take snapshots of all distributed processes every $0.5$ second during execution to get their memory usage.
Then, we use the snapshot, $s_{max}$, at which the total memory usage of the processes becomes maximum. 
The score is normalized by the number of edges, that is,
\vspace{-5pt}
\begin{equation*}
\textit{(Mem Score)} :=  \frac{1}{|E|} \sum_{pr \in Pr}\{pr\text{'s}\ \text{Mem.}\ \text{Usage}\ \text{(byte)}\ \text{at}\ s_{max}\},
\vspace{-5pt}
\end{equation*}
where $pr$ is a process, and $Pr$ is a set of the processes.
Overall, \textit{Distributed NE} outperform the other methods by around one order of magnitude.
On average, its mem score is only 5.89 \% of the other methods.
The mem score slightly decreases as the increase of the graph size because the proportion of the edges to the total memory usage becomes decrease accordingly.

\begin{figure}[h]
\vspace{-10pt}
 \centering
 \subfigure[Real-world Graphs]{\includegraphics[width=.5\columnwidth]{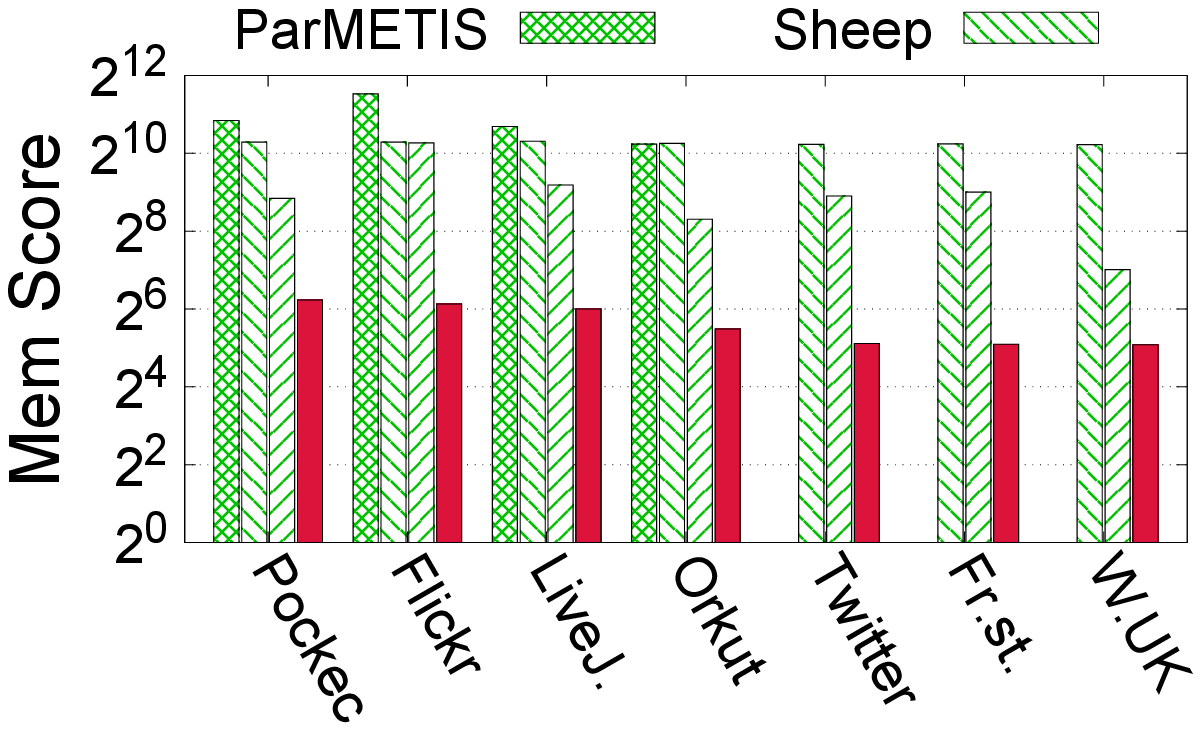}\label{fig:memusage_real}}%
 \subfigure[RMAT Graphs]{\includegraphics[width=.5\columnwidth]{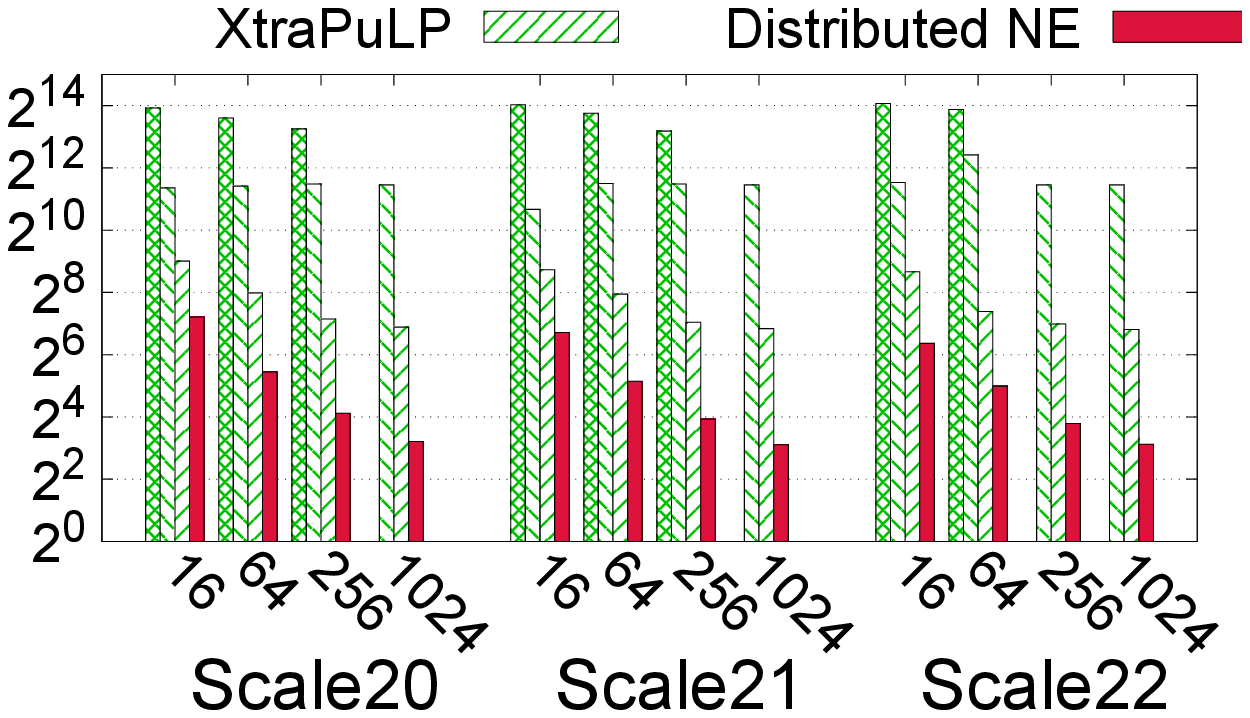}\label{fig:memusage_rmat}}
  \vspace{-13pt}
 \caption{Memory Consumption. \emph{Mem Score} is a total memory usage (byte) normalized by \# edges.}
  \vspace{-10pt}
\end{figure}

\textit{Distributed NE} is highly space-efficient because of two main reasons.
First, each distributed edge is unique without any replication among machines, while vertices are replicated.
In general, the replication of vertices is more space-efficient than that of edges since the vertex requires fewer bytes than the edge.
In the vertex partitioning such as \textit{ParMETIS} and \textit{XtraPuLP}, the memory consumption may become higher since the edges are replicated among machines instead of vertices. 
Especially in the skewed graph, the vertex replication is better since there are much more edges than vertices.
Moreover, in the common multiple coarsening-refinement approaches such as \textit{ParMETIS}, graph data are replicated multiple times for coarsening, and it requires much more memory than the others.
Second, in \textit{Distributed NE}, the graph data are stored without any memory-consuming data structure such as the hash map, which usually consumes around an order of magnitude memory space compared to the continuous array. 
The core components of the graph are stored in CSR, and their metadata is functionally computed instead of storing them.

Figure~\ref{fig:memusage_rmat} shows the memory consumption of the RMAT graphs on 64 machines.
\textit{ParMETIS} is unable to run all scales with  1024 and Scale22 with 256 due to out-of-memory.
As with the real-world graphs, \textit{Distributed NE} is much better than the others.
In \textit{Distributed NE}, the mem score shows the substantial decrease as the increase of the edge factor because during computation it compacts the duplicated edges, which have the same sources and destinations.
The duplication often appears in the higher edge factor.

\smallskip
\noindent\textbf{Elapsed Time.}
The efficiency is evaluated with the elapsed time.
We measure the time to compute partitions excluding the loading time to deploy input graph data.
The conversion time from the vertex partition to the edge partition is also excluded in \textit{XtraPuLP} and \textit{ParMETIS}.
We run the programs ten times and show the median value.

Figure~\ref{fig:performance-pokec} to \ref{fig:performance-webuk} show the elapsed time in the real-world graphs.
Overall, \textit{Distributed NE} outperform \textit{ParMETIS} and \textit{Sheep}.
In 64 partition, its speed up over \textit{ParMETIS} is up to 9.1 in \texttt{LiveJ.}.
The speed up over \textit{Sheep} is up to 19.8 in \texttt{Twitter}.
Its performance is basically comparable to \textit{XtraPuLP} and slightly worse in \texttt{Flickr} on 64 machines and \texttt{Friendster}.
In \texttt{Flickr} on 64 machines, \textit{Distributed NE} takes lots of iterations at the final part of the computation because many isolated edges remain to be allocated.
In the situation, edges are allocated by the random selection, but the number of allocated edges per iteration is few.
\texttt{Friendster} is the special case for \textit{XtraPuLP} to execute very fast as mentioned in the original paper~\cite{7967155}.

\begin{figure*}
 \centering
 \subfigure[\texttt{Pokec}]{\includegraphics[width=.20\textwidth]{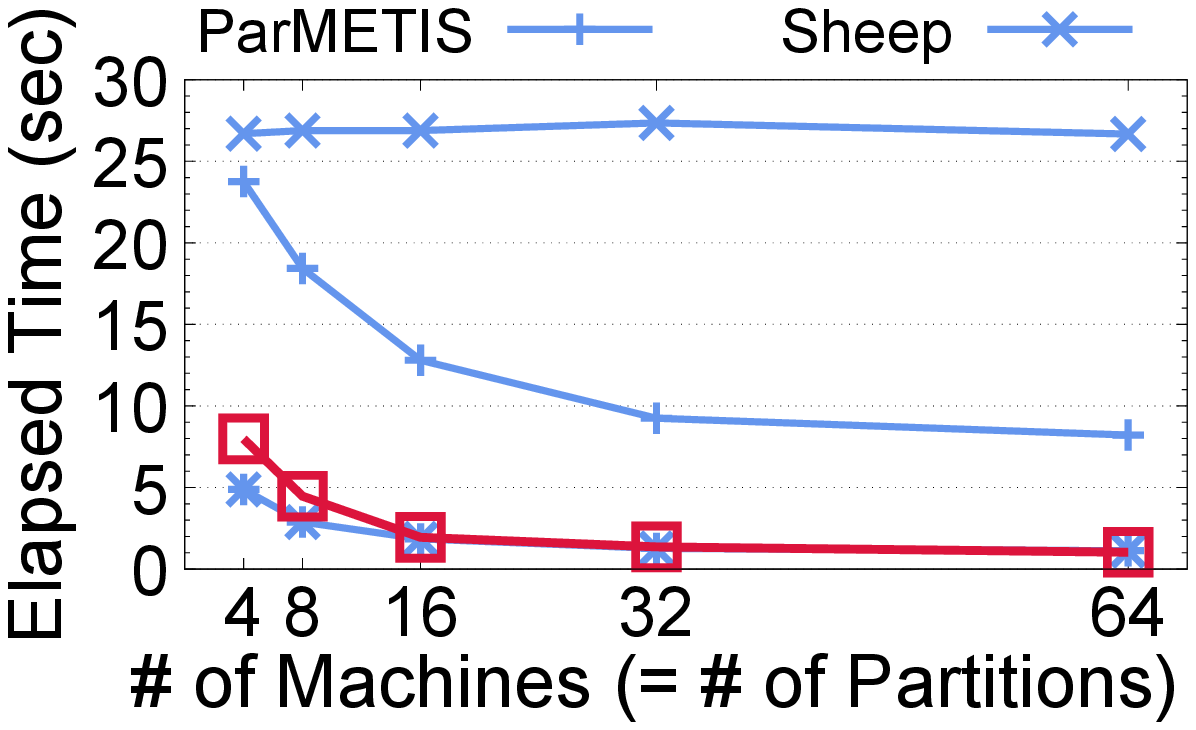}\label{fig:performance-pokec}}%
 \subfigure[\texttt{Flickr}]{\includegraphics[width=.20\textwidth]{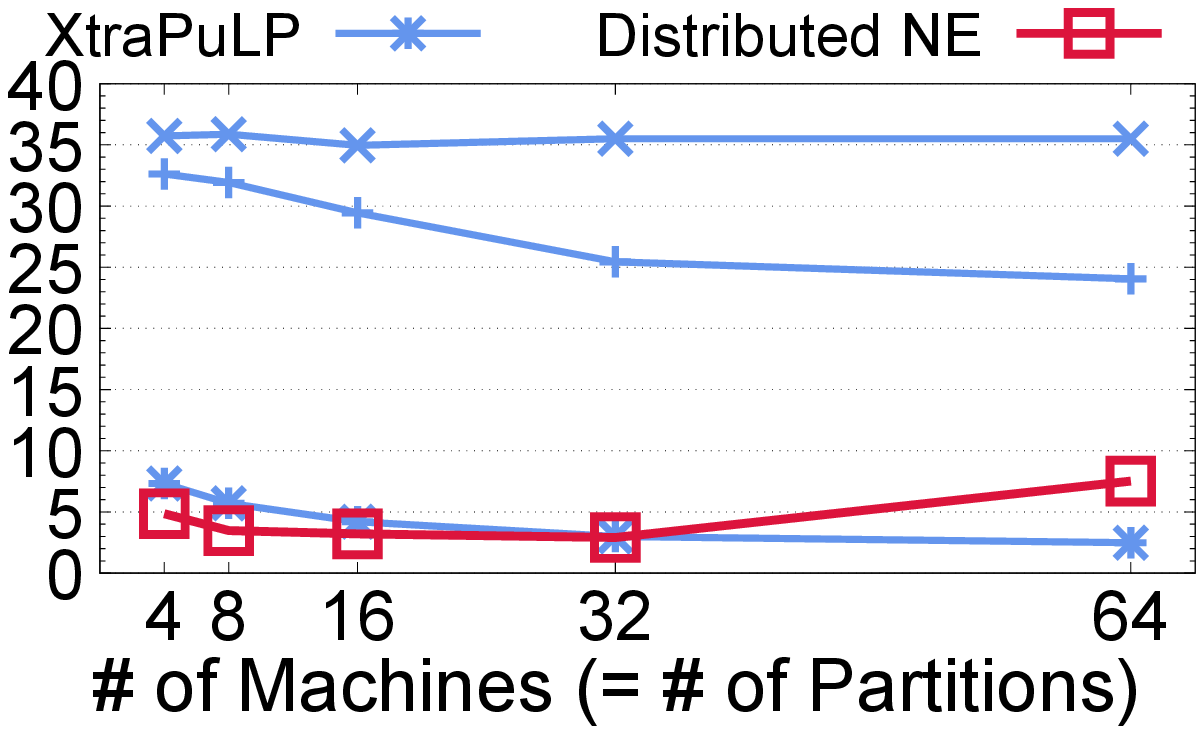}\label{fig:performance-flicker}}%
 \subfigure[\texttt{LiveJ.}]{\includegraphics[width=.20\textwidth]{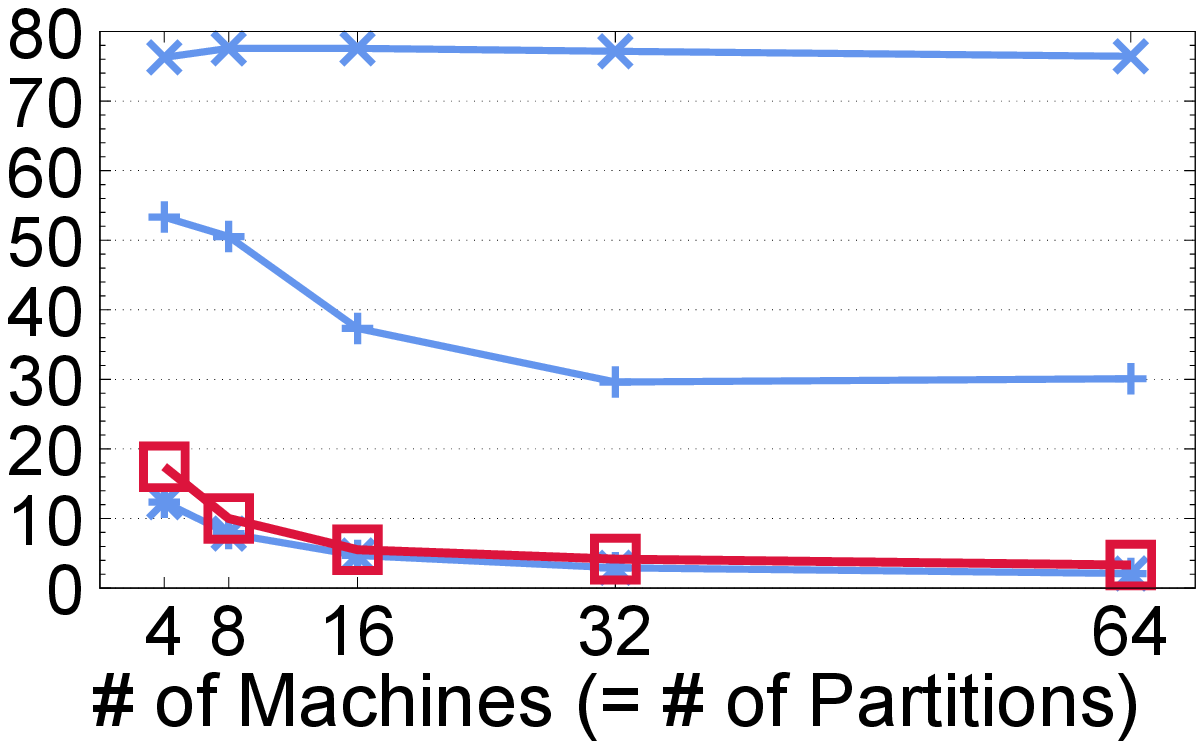}\label{fig:performance-livejournal}}%
 \subfigure[\texttt{Orkut}]{\includegraphics[width=.20\textwidth]{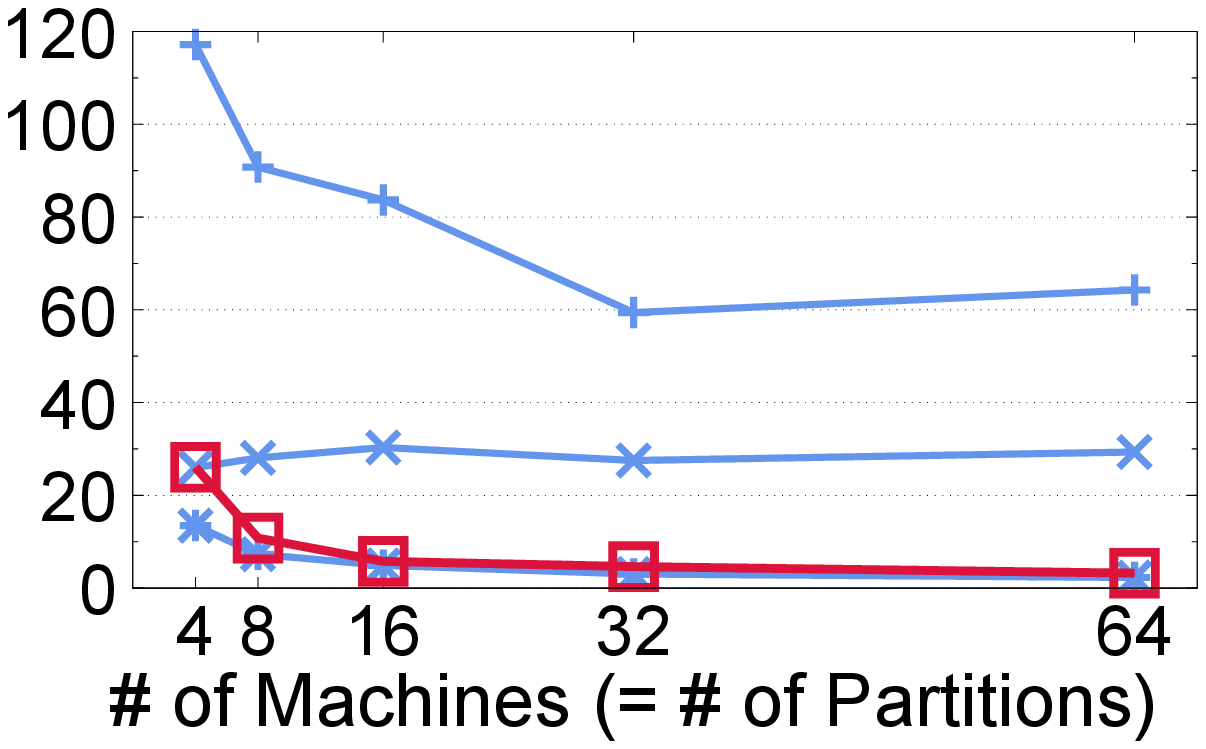}\label{fig:performance-orkut}}%
 \subfigure[\texttt{Twitter}]{\includegraphics[width=.20\textwidth]{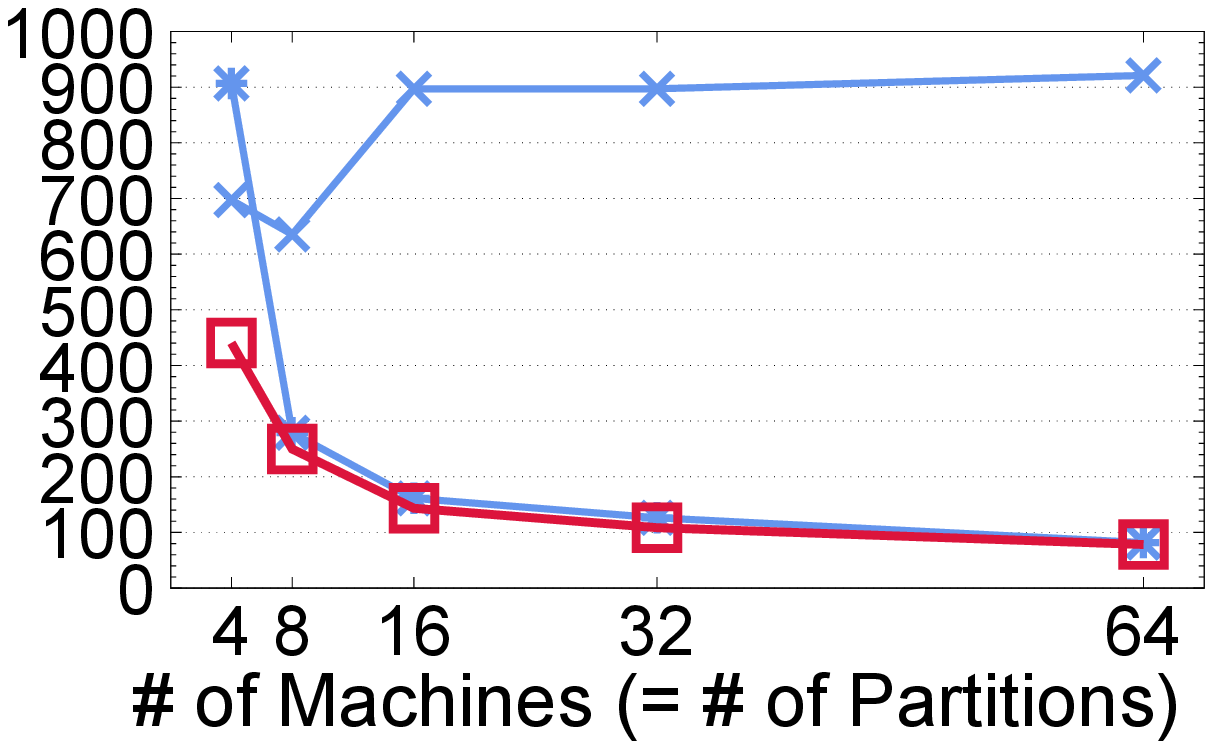}\label{fig:performance-twitter}} \\
 \subfigure[\texttt{Friendster}]{\includegraphics[width=.20\textwidth]{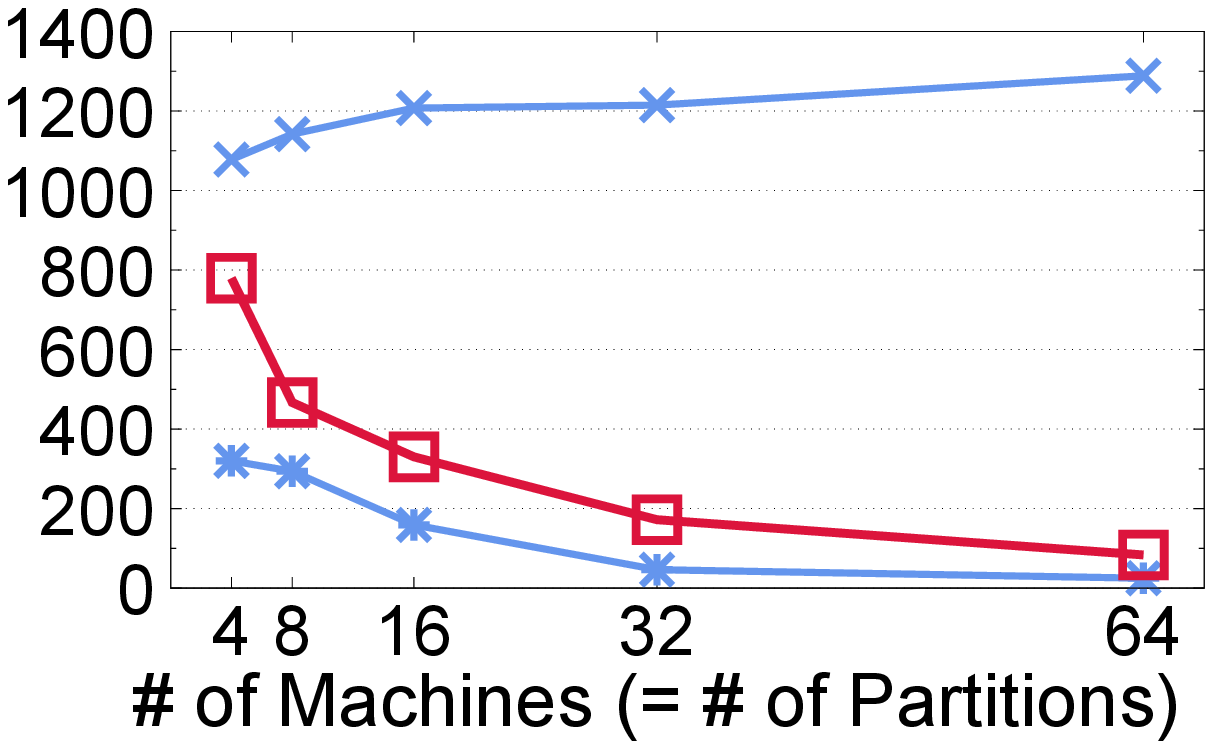}\label{fig:performance-friendster}}%
 \subfigure[\texttt{WebUK}]{\includegraphics[width=.20\textwidth]{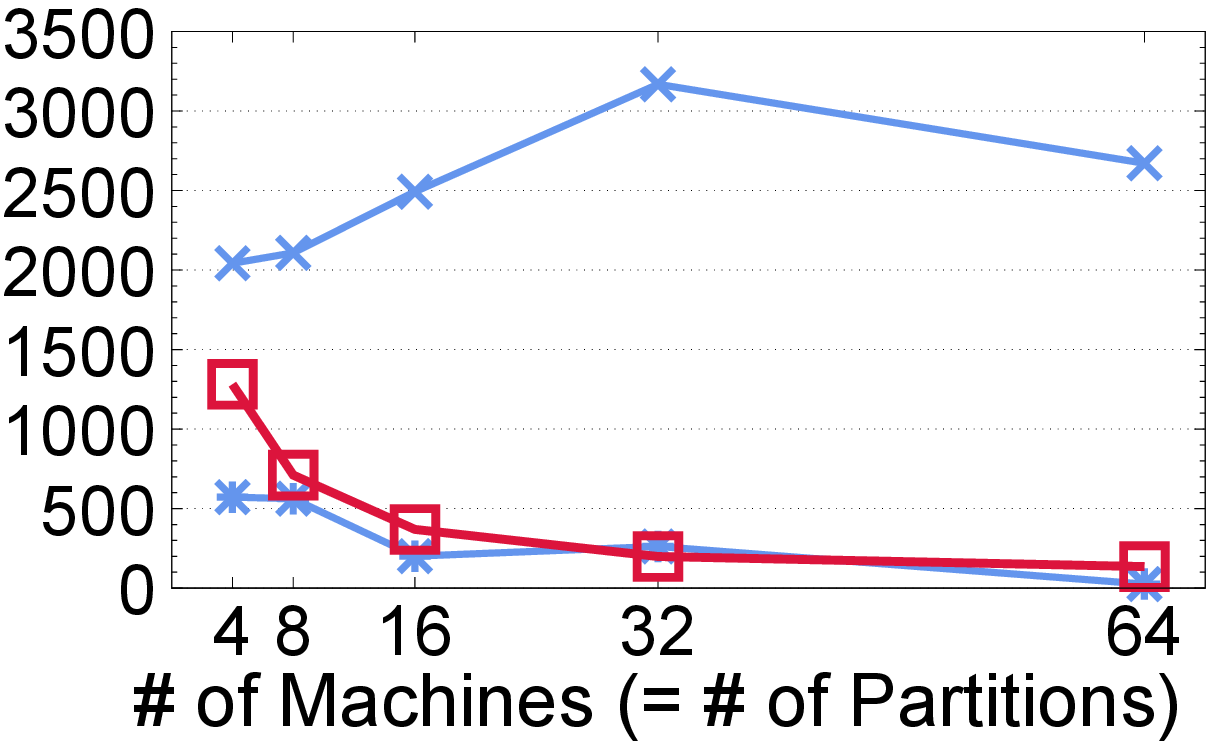}\label{fig:performance-webuk}}%
 \subfigure[\texttt{RMAT} Different EFs]{\includegraphics[width=.20\textwidth]{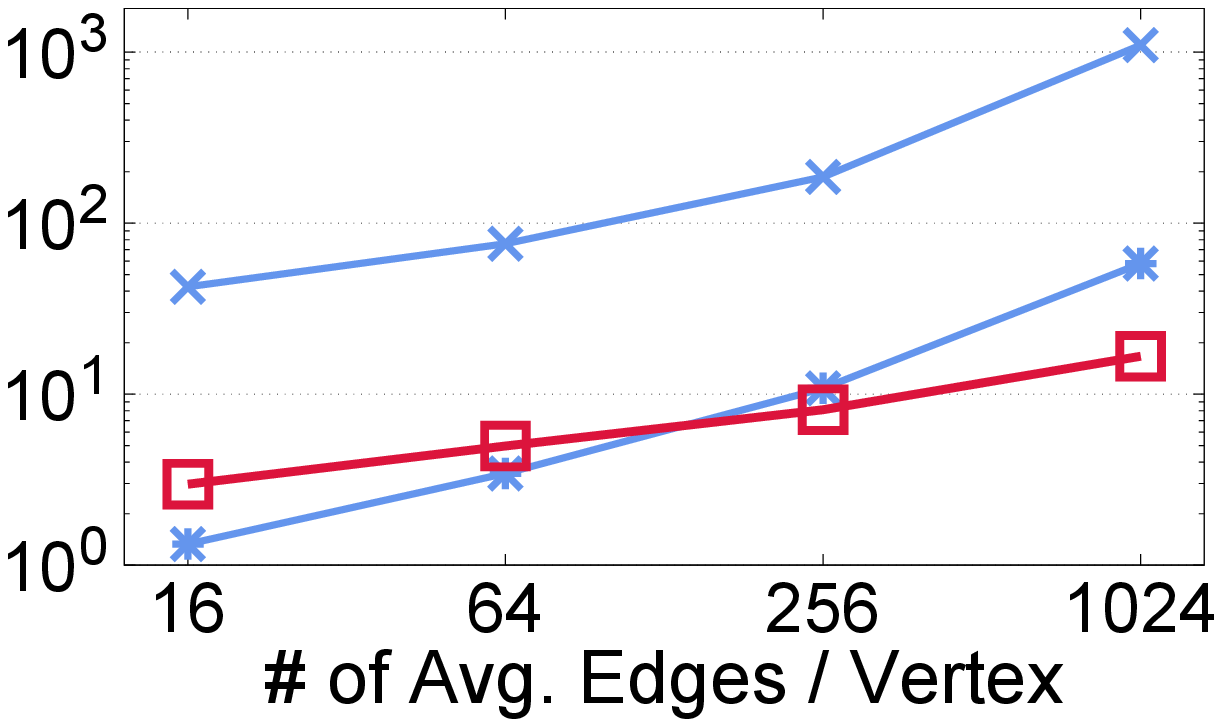}\label{fig:diff-edge}}%
 \subfigure[\texttt{RMAT} Different Scales]{\includegraphics[width=.20\textwidth]{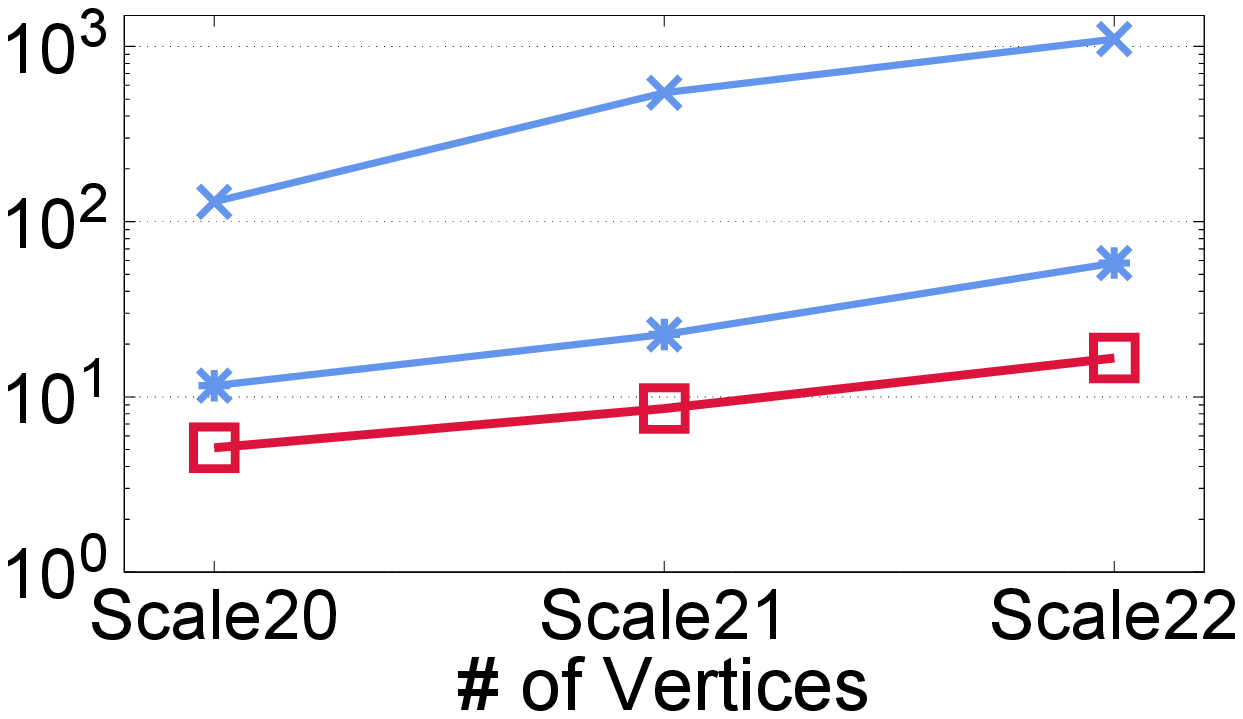}\label{fig:diff-vert}}%
 \subfigure[\texttt{Trillion Edges}]{\includegraphics[width=.20\textwidth]{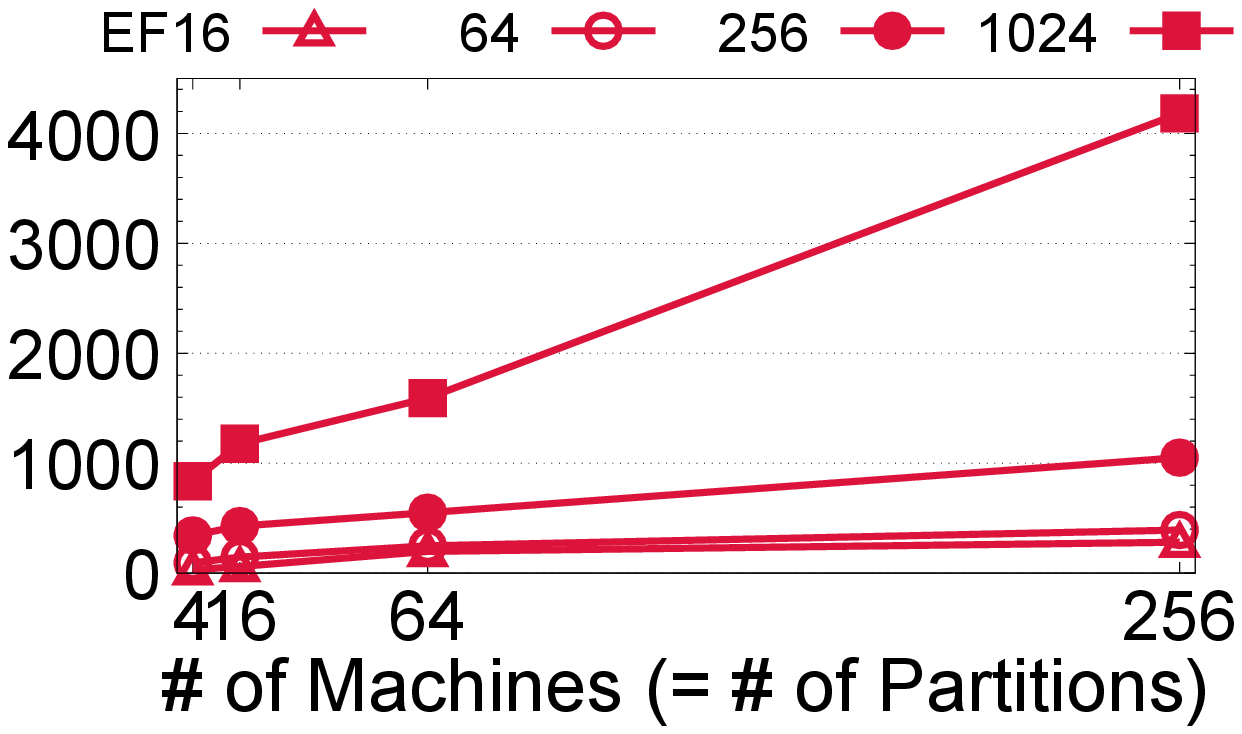}\label{fig:trillion}}
 \vspace{-13pt}
 \caption{Elapsed Time (sec) to Partition Real-world Graphs and RMAT Graphs.}\label{fig:performance}
 \vspace{-15pt}
\end{figure*}


Figure~\ref{fig:diff-edge} shows the elapsed time to partition Scale22 into 64 partitions on the different edge factors (EFs).
As the edge factor increases exponentially, so does the elapsed time in all algorithms.
The rate of the increase in \textit{Distributed NE} is lower than the others. 
As a result, \textit{Distributed NE} is slower than \textit{XtraPuLP} in  16 and 64, but it becomes faster in  256 and 1024.

\newpage
Figure~\ref{fig:diff-vert} shows the elapsed time of RMAT with Edge factor 1024 on the different scales using 64 machines.
The result is straightforward.
The elapsed time increases as the scale of RMAT graph.
Their increasing rates are similar in all algorithms.

\subsection{Scalability to Trillion-edge Graph}
Figure~\ref{fig:trillion} shows the scalability of \textit{Distributed NE} to trillion-edge graph.
We fix the number of vertices per machine as $2^{22}$ and change the number of machines (= the number of partitions).
Namely, we use Scale24 on 4 machines, Scale26 on 16 machines, Scale28 on 64 machines, and Scale30 on 256 machines, respectively.
As the number of machines increases, so does the elapsed time linearly.
This is mainly because workload imbalance occurs in the vertex selection in expansion processes.
Since the expansion rate is completely different in each partition, few partitions which have much more boundary vertices become the bottleneck of the entire processing.
For example in Edge factor 1024 (EF 1024), the elapsed time of the vertex selection on 4 machines is less than 1 \% of the entire processing, but it becomes 30.3 \% on 256 machines.
The communication cost also linearly increases as the increase of the number of machines.

Finally, due to the space-efficient design and implementation, \textit{Distributed NE} can cope with the trillion-edge graph (Scale30, EF 1024) using only 256 machines.
Its elapsed time is 69.7 minutes.


\subsection{Comparison with Sequential Algorithms}
Table~\ref{tbl:sequential} shows the comparison with the state-of-the-art sequential and streaming algorithms in the middle-scale real-world graphs.
\textit{NE} is the sequential offline algorithm, whereas \textit{HDRF} and \textit{SNE} are the sequential streaming algorithms.
Each graph is partitioned into 64 subgraphs. 
Thus, \textit{Distributed NE} are computed on 64 machines.
Although \textit{NE} provides the best replication factor, \textit{Distributed NE} is much faster than the sequential algorithms.
 \begin{table}[h]
 \vspace{-15pt}
 \centering
 \caption{Comparison with Sequential Algorithms.}\label{tbl:sequential}
 \scalebox{0.75}{
 \begin{tabular}{|c|l|r|r|r|r|}
 \hline
     \multicolumn{2}{|c|}{} & \multicolumn{1}{c|}{\texttt{Pokec}} & \multicolumn{1}{c|}{\texttt{Flickr}} & \multicolumn{1}{c|}{\texttt{LiveJ.}} & \multicolumn{1}{c|}{\texttt{Orkut}} \\ \hlinewd{2\arrayrulewidth}
 \parbox[t]{2mm}{\multirow{4}{*}{\rotatebox[origin=c]{90}{RF}}} 
     & \textit{HDRF} \cite{Petroni:2015:HSP:2806416.2806424} & 6.92 & 3.33 & 4.71 & 10.42\\
     & \textit{NE} \cite{Zhang:2017:GEP:3097983.3098033}  & \textbf{2.71} & \textbf{1.51} & \textbf{1.72} & \textbf{3.05} \\
     & \textit{SNE} \cite{Zhang:2017:GEP:3097983.3098033} & 3.89 & 1.78 & 2.12 & 5.66\\
     & \textbf{\textit{Distributed NE}} & \textit{3.92} & \textit{1.72} & \textit{2.19} & \textit{4.60} \\ \hline
\parbox[t]{2mm}{\multirow{4}{*}{\rotatebox[origin=c]{90}{\small{Time (sec)}}}} 
     & \textit{HDRF} & 24.310 &  24.370 &  57.228 &  92.479 \\
     & \textit{NE}   & 61.890 &  62.910 & 143.690 & 182.288 \\
     & \textit{SNE}  & 82.999 & 131.926 & 370.335 & 206.482 \\
     & \textbf{\textit{Distributed NE}} & \textbf{\textit{1.029}} & \textbf{\textit{7.523}} & \textbf{\textit{3.309}} & \textbf{\textit{3.224}} \\ 
\hline
 \end{tabular}}
  \vspace{-5pt}
 \end{table}


\begin{highlight}
\begin{table*}[t]
\newcolumntype{B}{!{\vrule width 3\arrayrulewidth}}
\caption{Performance of Graph Applications (\textit{SSSP}, \textit{WCC}, \textit{PageRank}) on 64 Partitions. \texttt{EB} is Edge Balance. \texttt{VB} is Vertex Balance. \texttt{ET} is Elapsed Time (sec). \texttt{COM} is COMmunication cost (GB). \texttt{WB} is Workload Balance.}
\label{tbl:perf}
\scalebox{0.72}{
  \begin{tabular}{|l|lBr|r|rBr|r|rBr|r|rBr|r|rBr|r|rBr|r|rBr|r|r|} \hline
  \multicolumn{2}{|cB}{} & \multicolumn{3}{cB}{\texttt{Flickr}}  & \multicolumn{3}{cB}{\texttt{Pokec}} & \multicolumn{3}{cB}{\texttt{LiveJ.}} & \multicolumn{3}{cB}{\texttt{Orkut}} & \multicolumn{3}{cB}{\texttt{Twitter}}  & \multicolumn{3}{cB}{\texttt{FriendSter}} & \multicolumn{3}{c|}{\texttt{WebUK}} \\ \hline \hline
  \multicolumn{2}{|cB}{} & \texttt{RF} & \texttt{EB} & \texttt{VB} & \texttt{RF} & \texttt{EB} & \texttt{VB} & \texttt{RF} & \texttt{EB} & \texttt{VB} & \texttt{RF} & \texttt{EB} & \texttt{VB} & \texttt{RF} & \texttt{EB} & \texttt{VB} & \texttt{RF} & \texttt{EB} & \texttt{VB} & \texttt{RF} & \texttt{EB} & \texttt{VB} \\ \hline
  \multirow{5}{*}{\rotatebox[origin=c]{90}{Quality}} & \textit{Rand.} & 7.3 & 1.0 & 1.0 & 18.1  & 1.0 & 1.0 & 11.8 & 1.0 & 1.0 & 33.4 & 1.0 & 1.0 & 17.8 & 1.0 & 1.0 & 20.0 & 1.0 & 1.0 & 21.6 & 1.0 & 1.0 \\ 
                                                     & \textit{2D-R.}  & 4.4 & 1.0 & 1.0   & 9.1 & 1.0   & 1.0   & 6.8 & 1.0 & 1.0 & 12.7 & 1.0 & 1.0 & 9.1  & 1.0 & 1.0 & 8.3 & 1.0 & 1.0 & 10.1 & 1.0 & 1.0 \\ 
                                                     & \textit{Obli.} & 6.3 & 1.7 & 1.1   & 13.6 & 1.6  & 1.1   & 9.0 & 1.1 & 1.0 & 20.9 & 1.3 & 1.0 & 13.8 & 1.0 & 1.0 & 14.3 & 1.0 & 1.0 & 4.0 & 1.3 & 1.0 \\ 
                                                     & \textit{H.G.}  & 4.0 & 1.2 & 1.0 & 10.2 & 1.2  & 1.1   & 6.0 & 1.1 & 1.1 & 14.3 & 2.5 & 1.1 & 5.5 & 1.3 & 1.1 & 9.6 & 1.3 & 1.0 &  3.4  & 1.0 & 1.0 \\ 
                                            & \textbf{\textit{D.NE}}  & \textit{\textbf{1.8}} & \textit{1.1} & \textit{3.5} & \textit{\textbf{4.3}} & \textit{1.1} & \textit{1.2} & \textit{\textbf{2.5}} & \textit{1.1} & \textit{1.3} & \textit{\textbf{5.1}} & \textit{1.1} & \textit{1.6} & \textit{\textbf{2.9}} & \textit{1.1} & \textit{1.6} & \textit{\textbf{3.5}} & \textit{1.1} & \textit{1.9} & \textit{\textbf{1.5}} & 1.1 & 1.6 \\ \hline \hline
  \multicolumn{2}{|cB}{} & \texttt{ET} & \texttt{COM} & \texttt{WB} & \texttt{ET} & \texttt{COM} & \texttt{WB} & \texttt{ET} & \texttt{COM} & \texttt{WB} & \texttt{ET} & \texttt{COM} & \texttt{WB} & \texttt{ET} & \texttt{COM} & \texttt{WB} & \texttt{ET} & \texttt{COM} & \texttt{WB} & \texttt{ET} & \texttt{COM} & \texttt{WB} \\ \hline
  \multirow{5}{*}{\rotatebox[origin=c]{90}{\textit{SSSP}}} & \textit{Rand.} & 2.96  & 1.78 & 1.58 & 2.91 & 3.10 & 1.46 & 4.08 & 6.02 & 1.41 & 4.45 & 11.3 & 1.25 & 22.7 & 87 & 1.15 & 50.3 & 146 & 1.20 & 88.4 & 254 & 1.27 \\ 
                                                           & \textit{2D-R.}  & 2.98  & 1.16 & 1.36 & 2.63 & 1.70 & 1.32 & 3.36 & 3.70 & 1.16 & 3.25 &  5.2 & 1.22 & 14.0 & 53 & 1.22 & 27.3 &  73 & 1.27 & 60.6 & 141 & 1.21 \\ 
                                                           & \textit{Obli.} & 2.99  & 1.57 & 1.57 & 2.77 & 2.40 & 1.68 & 3.67 & 4.68 & 1.38 & 3.61 &  7.6 & 1.32 & 19.4 & 73 & 1.15 & 38.7 & 112 & 1.22 & 39.4 &  83 & 1.21 \\ 
                                                           & \textit{H.G.}  & 2.98  & 2.75 & 1.56 & 3.46 & 3.01 & 1.67  & 3.18 & 6.45 & 1.43 & 3.24 &  9.0 & 1.24 & 11.6 & 88 & 1.25 & 26.8 & 145 & 1.23 & N/A & N/A & N/A \\ 
                                          & \textbf{\textit{D.NE}} & \textbf{\textit{2.94}} & \textit{\textbf{0.63}} & \textit{1.28} & \textbf{\textit{2.63}} & \textbf{\textit{1.03}} & \textit{1.42} & \textbf{\textit{3.15}} & \textbf{\textit{1.83}} & \textit{1.46} & \textit{\textbf{2.48}} & \textbf{\textit{3.1}} & \textit{1.71} & \textbf{\textit{7.8}} & \textit{\textbf{30}} & \textit{1.34} & \textbf{\textit{17.6}} & \textbf{\textit{44}} & \textit{1.42} & \textbf{\textit{28.5}} & \textit{\textbf{58}} & \textit{1.43} \\ \hline
  \multirow{5}{*}{\rotatebox[origin=c]{90}{\textit{WCC}}} & \textit{Rand.}  & 4.77 & 3.87 & 1.30 & 6.58 & 8.33 & 1.30 & 10.08 & 14.7 & 1.25 & 17.50 & 31.1 & 1.16 & 89.3 & 156 & 1.18 & 286.0 & 406 & 1.12 & 396.2 & 733 & 1.16 \\ 
                                                          & \textit{2D-R.}   & 3.90 & 2.33 & 1.18 & 4.24 & 4.26 & 1.19 &  6.65 &  8.5 & 1.16 &  9.53 & 12.3 & 1.11 & 56.9 &  85 & 1.15 & 169.6 & 173 & 1.18 & 231.6 & 350 & 1.22 \\ 
                                                          & \textit{Obli.}  & 4.59 & 3.36 & 1.38 & 5.44 & 6.24 & 1.40 &  8.54 & 10.9 & 1.30 & 13.70 & 19.9 & 1.13 & 74.5 & 122 & 1.14 & 217.6 & 293 & 1.12 & 108.7 & 144 & 1.25 \\ 
                                                          & \textit{H.G.}   & 3.97 & 3.43 & 1.37 & 4.64 & 5.60 & 1.33 &  6.44 &  9.8 & 1.27 & 10.84 & 15.7 & 1.35 & 41.1 &  91 & 1.20 & 159.2 & 239 & 1.18 & 119 & 232 & 1.06 \\ 
                                          & \textbf{\textit{D.NE}} & \textit{\textbf{3.48}} &  \textit{\textbf{0.74}} & \textit{1.31} & \textbf{\textit{3.55}} & \textbf{\textit{1.94}} & \textit{1.30} & \textbf{\textit{4.69}} & \textit{\textbf{2.7}} & \textit{1.34} & \textit{\textbf{7.09}} & \textbf{\textit{5.2}} & \textit{1.24} & \textbf{\textit{31.1}} & \textbf{\textit{31}} & \textit{1.28} & \textbf{\textit{115.3}} & \textbf{\textit{71}} & \textit{1.26} & \textbf{\textit{61.2}} & \textbf{\textit{55}} & \textit{1.25} \\ \hline
  \multirow{5}{*}{\rotatebox[origin=c]{90}{\textit{PageRank}}} & \textit{Rand.} & 51.2 & 35.0 & 1.32 & 72.8 & 65.6 & 1.29 & 120.1 & 130 & 1.23 & 182.0 & 228 & 1.11 & 1568 & 1607 & 1.14 & 2820 & 2942 & 1.11 & 3370 & 3853 & 1.12 \\ 
                                                               & \textit{2D-R.}  & 36.2 & 19.8 & 1.14 & 45.4 & 32.6 & 1.13 &  79.1 & 71  & 1.13 & 93.2  & 91  & 1.05 & 863  &  798 & 1.11 & 1407 & 1239 & 1.07 & 1650 & 1826 & 1.09 \\ 
                                                               & \textit{Obli.} & 45.6 & 28.9 & 1.38 & 63.0 & 51.2 & 1.39 & 100.7 & 96  & 1.28 & 129.2 & 147 & 1.10 & 1223 & 1252 & 1.14 & 2070 & 2112 & 1.12 & 769  & 776 & 1.15 \\ 
                                                               & \textit{H.G.}  & 31.1 & 14.9 & 1.23 & 41.3 & 24.4 & 1.26 &  61.8 & 43  & 1.33 & 87.1  & 74  & 1.14 & 446  &  462 & 1.19 & 1253 & 1151 & 1.20 & 682  & 687 & 1.06 \\ 
                                              & \textbf{\textit{D.NE}} & \textit{\textbf{28.0}} & \textit{\textbf{4.6}} & \textit{1.69} & \textit{\textbf{34.4}} & \textbf{\textit{14.0}} & \textit{1.33} & \textit{\textbf{49.4}} & \textit{\textbf{20}} & \textit{1.36} & \textbf{\textit{65.4}} & \textbf{\textit{33}} & \textit{1.44} & \textbf{\textit{362}} & \textit{\textbf{216}} & \textit{1.35} & \textbf{\textit{806}} & \textbf{\textit{432}} & \textit{1.22} & \textit{\textbf{289}} & \textit{\textbf{137}} & \textit{1.36} \\ \hline
  \end{tabular}
}
\vspace{-10px}
\end{table*}

\subsection{Effect on Distributed Graph Applications}
We briefly evaluate the effect of the partitioning methods on the distributed graph applications with the different real-world graphs (see Table~\ref{tbl:real_world}).
We use 3 common graph applications which have different communication patterns: Single Source Shortest Path (\emph{SSSP}), Weakly Connected Component (\emph{WCC}) and PageRank (\emph{PR}). These are implemented on PowerLyra~\cite{Chen:2015:PDG:2741948.2741970} forked from PowerGraph~\cite{joseph2012powergraph}.
\emph{SSSP} is the lightest workload and only involves a few communications; \emph{WCC} is medium; and \emph{PR} is the heaviest, where all the vertices send messages to their destinations in every iteration.
We also refer the reader to the performance analysis of distributed graph applications, such as \cite{Han:2014:ECP:2732977.2732980,6877273,Verma:2017:ECP:3055540.3055543,abbas2018streaming}.
Our result is consistent with the results presented in these analyses.
In \emph{SSSP}, we select Vertex~0 as the source. 
In \emph{PR}, we conduct 100 times iterations.

Table~\ref{tbl:perf} shows the result with regard to partitioning quality, balanceness, elapsed time, and communication cost on 64 machines.
In these result, we ignore the initialization phase (e.g., system setup, data loading, and partitioning) to show only the application performance.
We run each application 5 times and show the median.
We compare \textit{Distributed NE} (\textit{D.NE}) with 4 methods in PowerLyra: \textit{Random} (\textit{Rand.}), \textit{2D-Radom}(\textit{2D-R.}), \textit{Oblivious}(\textit{Obli.}), and \textit{Hybrid Ginger} (\textit{H.G.}).
N/A means that we cannot run the program correctly due to the system error by PowerLyra.

The balanceness among partitions is generally defined as $B(\{x_{p}|\ p \in P\}) := \frac{\max x_{p}}{\bar{x}}$, where $\bar{x} := \frac{\sum x_{p}}{|P|}$.
Therefore, the edge balance (\texttt{EB}) is $B(\{|E_p|\})$; the vertex balance (\texttt{VB}) is $B(\{|V(E_p)|\})$; and the workload balance is $B(\{\mathit{LT}_p\})$, where $\mathit{LT}_p$ is the local elapsed time in Partition~$p$.

Overall, \textit{Distributed NE} outperforms the others in the elapsed time for all the cases due to the reduction of the communication cost.
Especially, the improvement of the elapsed time is significant in \emph{PR} due to its high-communication workload; on the contrary, that of \emph{SSSP} is few due to its low-communication workload.

\textit{Distributed NE} achieves the good edge balance due to the algorithmic constraint, whereas the vertex balance becomes worse in some cases (e.g., \texttt{Flickr}).
However, this does not seriously affect the elapsed time.
Although Distributed NE is not explicitly aware of the vertex balance, it is not significantly worse due to its algorithmic characteristics. That is, during computation, each $|V(E_p)|$ slightly increases at a similar speed because each part greedily selects the next expanding vertex so that the increase of $|V(E_p)|$ is as minimal as possible.

\subsection{Evaluation with Non-skewed Graphs}
We evaluate the effect of \textit{Distributed NE} to 3 real-world road networks (California: 1.96M vertices and 2.76M edges; Pennsylvania: 1.08M vertices and 1.54M edges; and Texas: 1.37M vertices and 1.92M edges~\cite{leskovec2009community}) as the representative example of the large-scale non-skewed graphs. 
\textit{Distributed NE} provides the similar or slightly better partitioning quality compared to the other methods.
However, our claim is that traditional vertex partitioning would be a good choice in some cases because \textit{Distributed NE} is basically developed for skewed graphs as we already discussed in Section~\ref{sec:intro}.

 \begin{table}[h]
 \vspace{-15pt}
 \centering
 \caption{Replication Factor of Road Networks~\protect\cite{leskovec2009community}.}\label{tbl:sequential}
 \scalebox{0.75}{
 \begin{tabular}{|l|r|r|r|r|r|r|r|r|}
 \hline
     & \textit{Rand.} & \textit{2D-R.} & \textit{Obli.} & \textit{H.G.} & \textit{P.M.} & \textit{Sheep}& \textit{X.P.} & \textbf{\textit{D.NE}}\\ \hlinewd{2\arrayrulewidth}
     Calif. & 3.72 & 3.54 & 2.13 & 2.32 & \textbf{1.002} & 1.03 & 1.12 & \textit{1.02} \\
     Penn. & 3.74 & 3.55 & 2.14 & 2.40 & \textbf{1.004} & 1.03 & 1.11 & \textit{1.01} \\
     Tex. & 3.70 &  3.51 & 2.13 & 2.35 & \textbf{1.003} & 1.03 & 1.12 & \textit{1.02} \\ \hline
 \end{tabular}}
  \vspace{-10pt}
 \end{table}
\end{highlight}

\section{Conclusion}\label{sec:conclusion}



In this paper, we presented Distributed NE, a parallel and distributed edge partitioning algorithm, which produces high-quality partitions of skewed graphs fast and at scale as well as provides the theoretical upper bound of the partitioning quality.
There are two directions to improve our proposal.
On one hand, the further speed-up technique will be established for coping with exascale graphs~\cite{Ueno2017}. 
On the other hand, the extension to more complicated graph structures, such as dynamic graphs~\cite{huang2016leopard}, and hypergraphs~\cite{kabiljo2017social}, will be investigated.
{\bf Distributed NE is publicly available in} \url{http://www.masahanai.jp/DistributedNE/}.

\section*{Acknowledgement}
This research was supported in part under Singapore Ministry of Education (MoE) Academic Research Fund, Tier 1 Grant, No: RG20-14; Guangdong Province Innovative and Entrepreneurial Team Programme, No: 2017ZT07X386; and Shenzhen Peackock Programme, No: Y01276105.
\balance
\bibliographystyle{abbrv}
\bibliography{ref}

\end{document}